\newcommand{\cpp}{\textsf{C}\texttt{++}\xspace}
\newcommand{\competitorCHTree}{\texttt{CH\_Tree}\xspace}
\newcommand{\competitorCQTree}{\texttt{CQ\_Tree}\xspace}
\newcommand{\competitorBTree}{\texttt{B\_Tree}\xspace}
\newcommand{\competitorVector}{\texttt{Vector\_Insert}\xspace}
\newcommand{\competitorAVL}{\texttt{AVL\_Tree}\xspace}
\newcommand{\competitorTerrible}{\texttt{Naïve \competitorVector}\xspace}
\newcommand{\bucketNameBase}{\texttt{Logarithmic}\xspace}
\newcommand{\logarithmic}{\texttt{Logarithmic\_Method}\xspace}
\newcommand{\bucketLinearGrowSuffix}{\texttt{Linear}\xspace}
\newcommand{\bucketHullSuffix}{\texttt{Hull}\xspace}
\newcommand{\bucketBtreeSuffix}{\texttt{BTree}\xspace}
\newcommand{\bucketLinearGrow}{\bucketNameBase (\bucketLinearGrowSuffix)\xspace}
\newcommand{\bucketHull}{\bucketNameBase (\bucketHullSuffix)\xspace}
\newcommand{\bucketBtree}{\bucketNameBase (\bucketBtreeSuffix)\xspace}
\crefname{hypothesis}{Hypothesis}{Hypotheses}
\begin{document}

\newcommand\relatedversion{}
\newtheorem{observation}{Observation}

\title{\Large Practical Insertion-Only Convex Hull}%

\author{Ivor van der Hoog\thanks{IT University of Copenhagen, Denmark}\and Henrik Reinstädtler\thanks{ Heidelberg University, Germany} \and Eva Rotenberg\footnotemark[1]}
\date{}

\maketitle

\fancyfoot[R]{\scriptsize{Copyright \textcopyright\ 2025\\
Ivor van der Hoog, Henrik Reinstädtler and Eva Rotenberg}}

\begin{abstract} 
Convex hull data structures are fundamental in computational geometry. We study insertion-only data structures for convex hulls of a planar point set $P$ of size $n$, supporting various containment and intersection queries. When $P$ is sorted by $x$- or $y$-coordinate, convex hulls can be constructed in linear time using classical algorithms such as Graham scan. In the fully dynamic setting, the algorithm by Overmars and van Leeuwen~\cite{overmars1980dynamically} maintains the convex hull under insertions and deletions in $O(\log^2 n)$ time per update, supports queries in time logarithmic in the size of the convex hull, and uses $O(n)$ space. An open-source implementation of their method is available.

We investigate a variety of methods tailored to the insertion-only setting. We explore a broad selection of trade-offs involving robustness, memory access patterns, and space usage, providing an extensive evaluation of both existing and novel techniques.

We observe that all logarithmic-time methods rely on pointer-based tree structures, which suffer in practice due to poor memory locality. Motivated by this, we develop a vector-based solution inspired by Overmars' logarithmic method~\cite{overmars1983design}. Our structure has worse asymptotic bounds, supporting queries in $O(\log^2 n)$ time, but stores data in $O(\log n)$ contiguous vectors, greatly improving cache performance.

Through empirical evaluation on real-world and synthetic data sets, we uncover surprising trends. Let $h$ denote the size of the convex hull. We show that a naïve $O(h)$ insertion-only algorithm based on Graham scan consistently outperforms both theoretical and practical state-of-the-art methods under realistic workloads, even on data sets with rather large convex hulls.  While tree-based methods with $O(\log h)$ update times offer solid theoretical guarantees, they are never optimal in practice. In contrast, our vector-based logarithmic method, despite its theoretically inferior bounds, is highly competitive across all tested scenarios. It is optimal whenever the convex hull becomes large. 
\end{abstract}
\newpage
\section*{Acknowledgements:}
{\it Ivor van der Hoog} and  {\it Eva Rotenberg} are grateful to the VILLUM Foundation for supporting this research via Eva Rotenberg's Young Investigator grant (VIL37507) ``Efficient Recomputations for Changeful Problems''. Part of this work took place while these authors were affiliated with the Technical University of Denmark.
\\
{\it Ivor van der Hoog} This project has received funding from the European Union's Horizon 2020 research and innovation programme under the Marie Sk\l{}odowska-Curie grant agreement No 899987. \\
{\it Henrik Renstädtler} acknowledges support by DFG grant \hbox{SCHU 2567/8-1}.
\clearpage
\section{\texorpdfstring{Introduction.}{Introduction.}}
Let $P$ be a planar point set of size $n$. The convex hull of $P$ is the smallest~convex area enclosing $P$.
Convex hulls are among the most~studied objects in computational geometry~\cite{Avis1995How, brodal2002dynamic}, with applications in clustering~\cite{gao2017trajectory,liparulo2015fuzzy,sander1998density}, shape analysis~\cite{Furukawa1995, Wang2007Supervised}, data pruning~\cite{giorginis2022fast, khosravani2016convex, margineantu1997pruning, ostrouchov2005fastmap}, selection queries~\cite{ihm2014approximate, mouratidis2017geometric}, and road network analysis~\cite{gao2017trajectory,liu2019fast, yan2011efficient}.

\subsection{Related work.}
We denote by $CH(P)$ the convex hull edges in cyclic order and $h  := |CH(P)|$. 
Dynamic convex hulls have been studied for several decades. 
A \emph{fully dynamic explicit data structure} maintains $CH(P)$ in sorted order. The classic algorithm by Overmars and van Leeuwen~\cite{overmars1980dynamically} explicitly maintains $CH(P)$ subject to point insertions and deletions in $O(\log^2 n)$ time. This remains the most efficient \emph{explicit} method to date.
Other works progressed by restricting the problem statement. Brewer et al.~\cite{brewer2024Dynamic} restrict $P$ to be an ordered set that form the vertices of simple path: \cite{brewer2024Dynamic} allows updates that append points (without introducing a crossing), and the reverse pop-operation, both in $O(\log n)$ time. Wang~\cite{Wang2023Dynamic} improves this to $O(\log h)$ in the case where $P$ remains $x$-monotone.

\emph{Implicit hulls}, in contrast, maintain auxiliary structures on $P$ to answer queries~\cite{chan2011three}, such as e.g. containment queries. Existing techniques support only \emph{non-decomposable} queries (which are defined in Section~\ref{sec:prelim}) in logarithmic time. 
Friedman et al.~\cite{Friedman1996Efficiently} were the first to consider implicit structures.
They restrict $P$ to a simple path and support appending and removing points along a path in amortised $O(\log n)$ and $O(1)$ time respectively.
Chan's~\cite{chan2001dynamic}  dynamic structure has $O(\log^{1+\varepsilon} n)$ amortised update time and $O(\log^{3/2} n)$ query time, later improved to $O(\log^{1+\varepsilon} n)$ expected time~\cite{chan2011three}.
Brodal and Jacob~\cite{Brodal2000Dynamic} obtain amortised $O(\log n \log \log n)$ update and $O(\log n)$ query time. 
This was improved to $O(\log n)$-time~\cite{brodal2002dynamic}.

\paragraph{Insertion-only.} 
Preparata~\cite{Preparata1979}, 
introduced the first explicit insertion-only convex hull algorithm. This solution has a $O(\log n)$ amortised update time. A simpler folklore adaption of Graham scan~\cite{graham1972efficient}, that we detail in our preliminaries, achieves $O(\log h)$ time.

Despite their broad applicability, dynamic convex hull \emph{implementations} are scarce. The Computational Geometry Algorithms Library (CGAL) offers only a dynamic implementation of Delaunay triangulations, which offer poor linear-time update performance. Chi et al.~\cite{chi2013distribution} provide a Java implementation of the Overmars–van Leeuwen algorithm, though it lacks query support, robustness, and efficiency~\cite{Gaede2024Simple}. 
The state-of-the-art is by Gæde et al.~\cite{Gaede2024Simple}, who use CGAL's exact computation types to robustly implement the $O(\log^2 n)$ algorithm from~\cite{overmars1980dynamically}. Implicit methods \cite{brodal2002dynamic, chan2011three, chan2001dynamic} remain unimplemented. These rely on involved cuttings that require significant simplifications to become implementable. 
\subsection{Contribution.}
We study practically efficient methods for maintaining a convex hull in an insertion-only setting. Starting from the folklore variant of Graham scan, we explore and evaluate several implementation design choices and efficiency parameters such~as:
\begin{itemize}[noitemsep]
    \item Robustness considerations, 
    \item Memory access patterns, and
    \item Practical space usage. 
\end{itemize}
\noindent
We thereby provide an elaborate, empirically tested, overview of insertion-only convex hull techniques.

Our primary technical contribution is an insertion-only convex hull data structure based on Overmars' logarithmic method~\cite{overmars1983design}.
We provide several insights to achieve amortised $O(\log n)$ update time and  $O(\log^2 n)$ query time. 
 Our structure partitions the input $P$ into subsets $P_1, \ldots, P_k$, maintaining $CH(P_i)$ in contiguous memory, improving cache efficiency.
 
We perform an extensive experimental evaluation, comparing our method against both classical fully dynamic approaches and other insertion-only techniques. We find that all insertion-only methods vastly outperform fully dynamic algorithms such as~\cite{Gaede2024Simple}, often by orders of magnitude. While tree-based insertion-only methods have favourable asymptotic bounds, we show that in practice they underperform. In contrast, our logarithmic method is either competitive or optimal across all tested scenarios. A surprising outcome of our study is that a naïve linear-time approach based on Graham scan outperforms all alternatives on real-world data, even if the convex hull is moderately large. 

Finally, our structure provides a pathway toward practical, fully dynamic convex hull maintenance. We show that given any deletion-only subroutine, our technique can be extended to support deletions also.

\section{\texorpdfstring{Preliminaries. }{Preliminaries.}}\label{sec:prelim}
Our input is a planar point set $P$. We denote by $n$ the size of $P$ at the time of an insertion, deletion, or query. We assume that $n > 1$. The \emph{convex hull} of $P$ is the minimal convex area enclosing all points of $P$, we denote by $CH(P)$ its edges in cyclical order. For any edge $(a, b)$, we denote its supporting line by $line(a, b)$.

Let $\{ \ell, \rho, \tau, \beta \}$ denote the leftmost, rightmost, topmost, and bottommost points of $P$, respectively. These extremal points must lie on $CH(P)$. It is standard practice~\cite{Gaede2024Simple, overmars1980dynamically, Wang2023Dynamic} to maintain only the \emph{upper convex hull} $CH^+(P)$, defined as the subchain of $CH(P)$ from $\ell$ to $\rho$ via $\tau$. Maintaining $CH^+(P)$ and $CH^+(P')$, where $P'$ is $P$ with mirrored $y$-coordinates, suffices to support all convex hull queries.
For algorithmic simplicity when describing our techniques, we deviate from this convention and instead maintain a \emph{quarter hull}:

\begin{definition}
We define the \emph{quarter hull} $CH^\ulcorner(P)$ as the subsequence of $CH^+(P)$ from $\ell$ to $\tau$. \\
We define $CH^\urcorner(P)$ analogously from $\tau$ to $\rho$.
\end{definition}

\begin{observation}
\label{obs:decreasing}
The edges of $CH^\ulcorner(P)$ have positive slope, and consecutive slopes are decreasing.
\end{observation}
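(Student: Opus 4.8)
The plan is to unpack the two claims about the quarter hull $CH^\ulcorner(P)$ directly from its definition as the subsequence of the upper hull $CH^+(P)$ running from the leftmost point $\ell$ to the topmost point $\tau$. First I would set up notation: write the vertices of $CH^\ulcorner(P)$ in order as $v_0 = \ell, v_1, \ldots, v_m = \tau$, so that the edges are $(v_{i-1}, v_i)$ for $i = 1, \ldots, m$, and denote by $s_i$ the slope of edge $(v_{i-1}, v_i)$. The two things to establish are (i) each $s_i > 0$, and (ii) $s_1 > s_2 > \cdots > s_m$.

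For the positive-slope claim, I would argue from the extremal characterisations of $\ell$ and $\tau$. As we move along the upper hull from the leftmost point $\ell$ toward the topmost point $\tau$, the $x$-coordinates are strictly increasing (this is the defining monotonicity of the upper hull: it is an $x$-monotone chain) and the $y$-coordinates are also strictly increasing, because $\tau$ is the unique topmost point and no vertex strictly between $\ell$ and $\tau$ on the upper chain can have $y$-coordinate exceeding that of its successor toward $\tau$ without contradicting convexity. Hence each edge $(v_{i-1}, v_i)$ has $\Delta x > 0$ and $\Delta y > 0$, giving $s_i = \Delta y / \Delta x > 0$.

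For the decreasing-slopes claim, the key tool is the convexity of the upper hull: the vertices lie on the boundary of a convex region, so the chain turns consistently clockwise as we traverse it left to right along the top. I would phrase this via the left-turn/right-turn (orientation) condition at each interior vertex $v_i$: convexity of the upper chain forces the point $v_i$ to lie above the line through $v_{i-1}$ and $v_{i+1}$, which is exactly the statement that $s_{i+1} < s_i$. Equivalently, on an $x$-monotone upper hull the edge slopes are strictly decreasing by the standard characterisation of upper convex chains; I would either invoke this directly or give the two-line orientation argument.

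The main obstacle, and really the only subtlety, is ruling out \emph{vertical} or \emph{zero-length} edges and confirming strict (rather than weak) monotonicity of both $x$- and $y$-coordinates between $\ell$ and $\tau$. This hinges on the genuine extremality and distinctness of $\ell$ and $\tau$ as convex hull vertices: since $\ell$ is strictly leftmost and $\tau$ strictly topmost, no degenerate or vertical edge can occur on the open chain between them, so slopes are well-defined, finite, and strictly positive. Once strict $x$-monotonicity is pinned down, both parts of the observation follow from convexity, and I would keep the write-up short since the result is essentially a restatement of the defining properties of the upper hull.
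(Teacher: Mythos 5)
Your proof is correct. Note that the paper offers no proof at all for this observation---it is stated as immediate from the definition of the quarter hull---and your argument (strict $x$- and $y$-monotonicity of the chain from $\ell$ to $\tau$ giving positive slopes, plus the orientation/concavity condition at each interior vertex giving strictly decreasing slopes) is exactly the standard justification the paper leaves implicit, including your sensible caveat about tie-breaking for $\ell$ and $\tau$ in degenerate configurations.
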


\noindent
Our implementations maintain the full upper hull $CH^+(P)$ by maintaining $CH^\ulcorner(P)$ and $CH^\urcorner(P)$ separately. We revisit the static, insertion-only, and fully dynamic algorithms that construct or maintain $CH^\ulcorner(P)$.

\paragraph{Queries.}
Chan identifies six convex hull queries~\cite{chan2011three}:
\begin{enumerate}[noitemsep]
    \item Finding the extreme point of $P$ in a given direction,
    \item Deciding whether a query line intersects $CH(P)$,
    \item Finding the hull vertices tangent to a query line,
    \item Deciding whether a point $q$ lies inside the area bounded by $CH(P)$,
    \item Finding the intersection points with a query line,
    \item Finding the intersection between two convex hulls.
\end{enumerate}

The first three are \emph{decomposable queries}: if $P$ is partitioned into $P_1$ and $P_2$, then the answer for $P$ can be computed in constant time from the answers for $CH(P_1)$ and $CH(P_2)$. These queries are therefore easier to support and are handled efficiently by implicit convex hull data structures~\cite{Brodal2000Dynamic, brodal2002dynamic, chan2011three,chan2001dynamic, Friedman1996Efficiently}.
The latter three are \emph{non-decomposable} and are typically not supported by implicit structures (Chan~\cite{chan2011three} supports these in $O(\log^{3/2} n)$ time). For general applicability, and to match CGAL and~\cite{chi2013distribution, Gaede2024Simple}, 
these non-decomposable queries will be our primary focus (see also Appendix~\ref{app:queries}).

\paragraph{Static construction.}
Constructing $CH^\ulcorner(P)$ has a natural $\Omega(n \log n)$ lower bound via a reduction from sorting. Let $P = (p_1, \ldots, p_n)$ be sorted by $x$-coordinate. Then Graham scan~\cite{graham1972efficient} constructs $CH^\ulcorner(P)$ in linear time by computing a sequence of segments of decreasing positive slope, starting from $\ell$ (see Alg.~\ref{alg:grahamscan} and~\ref{alg:Scan}). These can be adapted to compute $CH^\urcorner(P)$ instead. 

\begin{algorithm}
    \caption{\texttt{Graham}($x$-sorted point set $P$)}
    \label{alg:grahamscan}
    \begin{algorithmic}
    \STATE $CH^\ulcorner \gets \{ (p_1, p_2) \}$
    \FOR{int $k \in [2, n]$ with $p_k$ left of  $\tau$}
        \STATE \texttt{Scan}($CH^\ulcorner$, $p_k$)
    \ENDFOR
    \RETURN $CH^\ulcorner$
    \end{algorithmic}
\end{algorithm}

\begin{algorithm}
    \caption{\texttt{Scan}(convex sequence of edges with positive slope $CH^\ulcorner$, point $q$ right of $CH^\ulcorner$)}
    \label{alg:Scan}
    \begin{algorithmic}
    \STATE $(u, v) \gets CH^\ulcorner$.last
    \WHILE{$\text{slope}({line}(u, v)) < \text{slope}({line}(v, q))$}
        \STATE $CH^\ulcorner$.remove($CH^\ulcorner$.last)
        \STATE $(u, v) \gets CH^\ulcorner$.last
    \ENDWHILE
    \STATE $CH^\ulcorner$.append($(v, q)$)
    \RETURN $CH^\ulcorner$
    \end{algorithmic}
\end{algorithm}

\noindent
In the while-loop of \texttt{Scan}, each iteration either permanently removes a vertex from $CH^\ulcorner$ or terminates. Thus, it Graham scan runs in $O(n)$ total time.

\paragraph{Insertion-only Graham.}
It is folklore that \texttt{Scan} enables insertion-only convex hulls (see also Algorithm~\ref{alg:insertiononly}):
First test whether the insertion point $q$ lies above the hull. 
If so, split $CH^\ulcorner(P)$ into two sequences: $S_1$, preceding $q$ in $x$-coordinate, and $S_2$, succeeding $q$.
Then use \texttt{Scan} to add $q$ to both sequences. For this, we implicitly interpret $S_2$ in reverse order and with $x$-coordinates mirrored around $q$.
The new hull $CH^\ulcorner(P \cup \{q\})$ is the concatenation of the two updated subsequences.

\begin{algorithm}
    \caption{\texttt{Insert}($CH^\ulcorner(P)$, $q$ above $CH^\ulcorner(P)$)}
    \label{alg:insertiononly}
    \begin{algorithmic}
    \STATE $S_1 := \{v \in CH^\ulcorner(P), \textnormal{ left of } q \}$ 
    \STATE $S_2 := \{v \in CH^\ulcorner(P), \textnormal{ right of } q \}$, implicitly reversed
    \STATE $S_1 \gets$ \texttt{Scan}($S_1$, $q$)
    \STATE $S_2 \gets$ \texttt{Scan}($S_2$, $q$)
    \RETURN $S_1 \cup $ non-reversed($S_2$)
    \end{algorithmic}
\end{algorithm}

\begin{theorem}[Folklore]
\label{thm:folklore}
Let $P$ be an insertion-only point set and denote by $h$ the size of $CH^\ulcorner(P)$ at update time. Algorithm~\ref{alg:insertiononly} maintains $CH^\ulcorner(P)$ using $O(h)$ space with $O(\log h)$ amortised update time.
\end{theorem}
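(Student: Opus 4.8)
The plan is to establish the three assertions---correctness, the $O(h)$ space bound, and the $O(\log h)$ amortised update time---in that order, since the running-time analysis presupposes that the maintained sequence really is $CH^\ulcorner(P)$. For correctness I would induct on the number of insertions, maintaining the invariant that the stored sequence is exactly $CH^\ulcorner(P)$, i.e.\ a chain of positive, strictly decreasing slopes (Observation~\ref{obs:decreasing}); the base case is the single edge $(p_1,p_2)$. For the inductive step on an inserted $q$, I first test whether $q$ lies on or below the current chain, in which case $CH^\ulcorner(P\cup\{q\})=CH^\ulcorner(P)$ and nothing changes; otherwise $q$ is a new hull vertex and I must show that Algorithm~\ref{alg:insertiononly} returns $CH^\ulcorner(P\cup\{q\})$. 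The geometric fact to invoke is that the new quarter hull is the portion of the old chain up to the \emph{left tangent} from $q$, followed by $q$, followed by the portion from the \emph{right tangent} onward.

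I would then prove that \texttt{Scan}$(S_1,q)$ computes precisely the first portion: each while-loop iteration deletes a vertex $v$ exactly when edge $(u,v)$ has smaller slope than $(v,q)$, which by Observation~\ref{obs:decreasing} is equivalent to $v$ ceasing to be extreme, so the loop halts exactly at the left tangent. For $S_2$ the key step is to verify that reversing its order and mirroring $x$-coordinates about $q$ yields a chain of strictly decreasing slopes with $q$ lying to its right: reflection across a vertical line preserves being an upper chain and negates slopes, while reversing the traversal turns the originally decreasing slopes into a decreasing (now negative) sequence, so \texttt{Scan} applies verbatim and its output, after undoing the transformation, is the right-tangent portion. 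Concatenating the two outputs and checking that the slope is consistent at the splice point $q$ yields $CH^\ulcorner(P\cup\{q\})$.

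The space bound is immediate: the chain has $h$ vertices stored in a single balanced search tree keyed by $x$-coordinate (or an array), and the split into $S_1,S_2$ plus the scratch space for \texttt{Scan} is $O(h)$. For the time bound I would use that tree so that locating $q$, the above/below test, the split into $S_1,S_2$, and the final concatenation each cost $O(\log h)$ in the worst case. The only remaining cost is the work inside the two \texttt{Scan} while-loops, and here the amortised analysis exploits the insertion-only setting: every iteration \emph{permanently} removes a vertex, and since points are never deleted, a vertex that leaves the hull stays strictly interior forever and is never reinserted. Hence over any sequence of $N$ insertions the total number of while-loop iterations is at most the number of vertices ever created, namely $N+O(1)$; charging each removal its $O(\log h)$ deletion cost gives total scanning work $O(N\log h)$, i.e.\ $O(\log h)$ amortised, which together with the worst-case $O(\log h)$ for the searches, split, and join yields the claimed bound.

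The main obstacle I anticipate is not the amortisation---which is the clean ``each vertex dies once'' argument---but the correctness bookkeeping for $S_2$: one must argue rigorously that the reverse-and-mirror transformation sends the true right tangent to the vertex that \texttt{Scan} naturally retains, and separately dispatch the degenerate cases where $q$ replaces $\ell$ or $\tau$, so that one of $S_1,S_2$ is empty and the endpoint of the quarter hull itself changes.
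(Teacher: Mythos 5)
Your proposal is correct and its essential content matches the paper's proof: an $O(\log h)$ search to locate $q$ and split the chain, plus the amortisation that each \texttt{Scan} removal permanently discards a vertex (it falls strictly inside the hull and, in the insertion-only setting, can never reappear), so total scanning work is bounded by the number of insertions. The correctness induction, tangent/mirroring bookkeeping, and space accounting you add are details the paper leaves implicit under the ``folklore'' label, not a different approach.
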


\begin{proof}
    Searching $CH\ulcorner(P)$ takes $O(\log h)$ time. 
    Whenever \texttt{Scan}($CH^\ulcorner$, $q$) removes an edge $(u, v)$, the vertex $v$ lies under the line $line(u, q)$. Thus, insertion-only, the vertex $v$ can never again appear on $CH(P)$ and this scan takes amortised constant time. 
\end{proof}

\paragraph{Fully dynamic convex hull.}
We briefly describe for completeness the fully dynamic algorithm by Overmars and van Leeuwen to maintain $CH^+(P)$. 
Let $P_1$ and $P_2$ be two point sets, separated by a vertical line and let $P = P_1 \cup P_2$.
 Then $CH^+(P)$ consists of:
 \begin{itemize}[noitemsep]
     \item A contiguous subsequence of $CH^+(P_1)$,
     \item A contiguous subsequence of $CH^+(P_2)$
     \item The edge $e$ tangent to $CH^+(P_1)$ and $CH^+(P_2)$.
 \end{itemize}
Given $CH^+(P_1)$ and $CH^+(P_2)$, these three components can be computed in $O(\log n)$ time. Their data structure stores $P$ in a balanced binary tree $T$, sorted by $x$-coordinate. Each internal node considers the point sets $P_1$ and $P_2$ in its two child subtrees. Within the node, it stores these three above components. By combining these components, one can maintain at the root of the tree a binary tree that stores $CH^+(P)$ in cyclical order.

Each update affects one root-to-leaf path in $T$ of length $O(\log n)$. Updating all edges $e$ along this path takes $O(\log^2 n)$ time. The structure uses these edges $e$ to maintain $CH^+(P)$ at no asymptotic overhead.

\section{Engineering Insertion-only Convex Hulls.}
\label{sec:engineering}

Algorithm~\ref{alg:grahamscan} provides a linear-time static procedure to construct $CH^\ulcorner(P)$, while Algorithm~\ref{alg:insertiononly} implements an insertion-only approach with amortised $O(\log h)$ update time. Implementing these high-level algorithms gives rise to $3$ notable design decisions, which we detail below.

\paragraph{1. Storing $CH^\ulcorner(P)$ in a vector or pointer structure.}
Storing $CH^\ulcorner(P)$ in a dynamic vector optimises for query performance across all convex hull query types.
A vector ensures contiguous memory layout, which minimises cache misses and memory access overhead during traversal or search.
It is also memory-optimal, since no auxiliary pointers or node structures are required.

However, in an insertion-only setting inserting into vectors incurs additional costs.
Inserting a point $q$ at an intermediate location in $CH^\ulcorner(P)$ (i.e., not at the end) may require us to shift $\Theta(n)$ other elements. This increases the time complexity of Algorithm~\ref{alg:insertiononly} to $\Theta(n)$.

Pointer-based alternatives, such as balanced binary search trees, enable $O(\log h)$ searches and point insertions.  
However, they increase memory usage and lead to less efficient memory access patterns, which can hurt performance in practice. In Section~\ref{sec:implementations}, we detail several tree-based implementations.

\paragraph{2. Finger search.}
Algorithm~\ref{alg:Scan} performs a linear search through a sequence of edges $CH^\ulcorner$ via a while-loop.  
Given $q$, the loop identifies the maximal index $i$ such that the slope of edge $(u, v) = CH^\ulcorner[i]$ is less than that of $(v, q)$.  
If $|CH^\ulcorner| = m$, this linear scan takes $O(m - i)$ time. Theorem~\ref{thm:folklore} guarantees that this is amortised $O(1)$ over all insertions.
A faster practical method exists, though it yields no asymptotic gain:

\begin{observation}
\label{obs:binary}
For a convex sequence of $m$ edges $CH^\ulcorner$ and a point $q$, there exists a unique $i \in [m]$ s.t.:
\begin{itemize}[noitemsep]
    \item For all $j < i$, the slope of edge $CH^\ulcorner[j] = (u, v)$ satisfies $\text{slope}(\text{line}(u, v)) \geq \text{slope}(\text{line}(v, q))$.
    \item For all $j \geq i$, $\text{slope}(\text{line}(u, v)) < \text{slope}(\text{line}(v, q))$.
\end{itemize}
\end{observation}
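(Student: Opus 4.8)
The plan is to reduce the claim to a single monotonicity statement: writing $(u,v) = CH^\ulcorner[j]$, the predicate $P(j)$ given by $\text{slope}(\text{line}(u,v)) < \text{slope}(\text{line}(v,q))$ is \emph{monotone} in $j$, meaning that once it holds for some index it holds for every larger index. Granting this, the indices violating $P(j)$ form a prefix of $[m]$ while those satisfying it form a suffix, so the boundary index $i$ (the first index at which $P$ becomes true) exists and is uniquely determined, which is precisely the two-bullet statement of the observation. Thus essentially all the content lies in proving $P(j) \Rightarrow P(j+1)$.

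First I would fix notation, writing the vertices of the chain as $w_0, w_1, \ldots, w_m$ so that edge $CH^\ulcorner[j] = (w_{j-1}, w_j)$ and consecutive edges share the vertex $w_j$. By Observation~\ref{obs:decreasing} the edge slopes $s_j := \text{slope}(\text{line}(w_{j-1}, w_j))$ are strictly decreasing, $s_1 > s_2 > \cdots > s_m$. Following the precondition of Algorithm~\ref{alg:Scan}, I take $q$ to lie to the right of the chain, so that $x_q > x_{w_j}$ for every $j$; this positivity of the $x$-differences is what makes the argument go through.

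The key step is a geometric reformulation of $P(j)$. Let $L_j := \text{line}(w_{j-1}, w_j)$ be the supporting line of edge $j$, whose value at abscissa $x$ is $y_{w_j} + s_j(x - x_{w_j})$. Since $x_q - x_{w_j} > 0$, dividing the inequality $y_q - y_{w_j} > s_j(x_q - x_{w_j})$ by this positive quantity shows that $q$ lies strictly above $L_j$ if and only if $\text{slope}(\text{line}(w_j, q)) > s_j$, i.e.\ exactly when $P(j)$ holds. Monotonicity is then immediate from convexity: both $L_j$ and $L_{j+1}$ pass through the shared vertex $w_j$, and since $s_{j+1} < s_j$ the difference of their heights at any $x$ equals $(s_j - s_{j+1})(x - x_{w_j})$, which is strictly positive for all $x > x_{w_j}$. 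Hence $L_{j+1}$ lies strictly below $L_j$ to the right of $w_j$, and because $x_q > x_{w_j}$, any $q$ lying above $L_j$ also lies above $L_{j+1}$; that is, $P(j) \Rightarrow P(j+1)$.

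I expect the main obstacle to be the reformulation step together with the careful bookkeeping of inequality directions: one must verify that the relevant $x$-differences are positive (which is where the assumption that $q$ is to the right of the chain is used) so that dividing preserves the inequality, and one must handle the degenerate boundary cases in which $P$ holds for every index or for none, so that the threshold $i$ sits at an endpoint of $[m]$. Beyond this, the entire argument rests only on the strictly decreasing edge slopes already recorded in Observation~\ref{obs:decreasing} and on the elementary fact that two lines crossing at a common point are ordered to one side according to their slopes.
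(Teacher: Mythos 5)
Your proof is correct; note that the paper itself gives no proof of this observation, treating it as immediate from Observation~\ref{obs:decreasing} and the precondition of Algorithm~\ref{alg:Scan}, so your monotonicity argument ($P(j) \Rightarrow P(j+1)$ via the equivalence between $P(j)$ and ``$q$ lies strictly above the supporting line of edge $j$,'' which requires $x_q > x_{w_j}$) is exactly the justification the authors leave implicit. Two remarks, both essentially in your favour. First, you are right that the hypothesis that $q$ lies strictly to the right of the chain is indispensable and must be imported from the usage in Algorithms~\ref{alg:Scan} and~\ref{alg:quickscan}: for an arbitrary point $q$ the statement is false (place $q$ far above an interior vertex of the chain; then $P$ holds for the edges left of $q$ and fails for those right of $q$, the opposite of the claimed prefix/suffix pattern), so making this assumption explicit is the correct reading of the observation. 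Second, the degenerate case you flag, where $P$ holds for no index, is a genuine off-by-one in the paper's statement rather than a gap in your argument: in that case the unique threshold is $i = m+1 \notin [m]$ (operationally, \texttt{QuickScan} then deletes nothing), so the range should really be $i \in [m+1]$; your proof establishes the prefix/suffix structure in every case, which is all that the binary-search and finger-search applications actually use.
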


\noindent
Observation~\ref{obs:binary} allows us to locate $i$ using binary search in $O(\log m)$ time. However, $O(\log m)$ may dominate $O(m - i)$. \emph{Finger search}~\cite{brodal2018finger} achieves $O(\log (m - i))$ time to identify $i$ instead.
Once $i$ is found, a sequence of $O(m - i)$ elements must still be removed from $CH^\ulcorner(P)$, as shown in Algorithms~\ref{alg:quickscan} and \ref{alg:quickinsert}.

\begin{algorithm}[H]
    \caption{\texttt{QuickScan}($CH^\ulcorner$, $q$, $m = |CH^\ulcorner|$)}
    \label{alg:quickscan}
    \begin{algorithmic}
    \STATE $f \gets$ $CH^\ulcorner$.last
    \STATE find, from $f$, $i :=$ minimum $j \in [m]$ where for $(u, v) = CH^\ulcorner[j]$, slope($line(u, v)$) $<$ slope($line(v,q)$)
    \RETURN $i$
    \end{algorithmic}
\end{algorithm}
\newpage

\begin{algorithm}[H]
    \caption{\texttt{QuickIns}($CH^\ulcorner(P)$, $q$ above $CH^\ulcorner(P)$)}
    \label{alg:quickinsert}
    \begin{algorithmic}
    \STATE $S_1 \gets \{v \in CH^\ulcorner(P) \mid \text{left of } q\}$
    \STATE $S_2 \gets \{v \in CH^\ulcorner(P) \mid \text{right of } q\}$, implicitly reversed
    \STATE $i \gets$ \texttt{QuickScan}($S_1$, $q$)
    \STATE $j \gets$ \texttt{QuickScan}($S_2$, $q$)
    \STATE Delete from $S_1$ its suffix starting at $S_1[i]$
    \STATE Delete from $S_2$ its prefix up to $S_2[j]$
    \RETURN $S_1 \cup \text{reverse}(S_2)$
    \end{algorithmic}
\end{algorithm}

\paragraph{3. Balanced deletion.} 
Given the index $i$ from Observation~\ref{obs:binary}, we must delete a contiguous sequence of edges from $CH^\ulcorner(P)$.  
During static construction (Algorithm~\ref{alg:grahamscan}), this sequence is a suffix of the current set of edges.  
For vectors, suffix deletion takes constant time by simply reducing the vector's size.
In a dynamic insertion-only setting, however, the deleted interval is no longer a suffix (see Algorithm~\ref{alg:quickinsert}).  
Any vector-based solution therefore incurs $\Omega(h)$ update time in the worst case.  
Tree-based solutions also face challenges: if $CH^\ulcorner(P)$ is stored in a balanced tree with $h$ leaves, then naively deleting $k$ elements requires $O(k \log h)$ time.  
Although the amortised cost remains $O(\log n)$, the constant factors are considerable.
In a balanced tree, a contiguous interval of $k$ leaves spans only $O(\log k)$ disjoint subtrees.  
These subtrees can be deleted individually, and the structure rebalanced in $O(\log k \log h)$ time.  
We refer to this approach as \emph{balanced deletion}.  

\section{\texorpdfstring{Implementations.}{Implementations.}}
\label{sec:implementations}

Given our considerations in Section~\ref{sec:engineering}, we implement our static Algorithm~\ref{alg:grahamscan} by storing $CH^\ulcorner(P)$ in a vector.
We use the \texttt{QuickIns} subroutine that uses \texttt{Quickscan}. 
We also provide $3$ implementations of Algorithm~\ref{alg:insertiononly}:

\begin{itemize}[noitemsep]
    \item \competitorVector maintains $CH^\ulcorner(P)$ in an ordered vector. Updates use Algorithm~\ref{alg:quickinsert}. 
    If $q \in CH(P \cup \{ q \})$ then updates can take $\Theta(h)$ time since we need to shift all vector elements that succeed $q$.
    \item \competitorAVL maintains $CH^\ulcorner(P)$ in an AVL-tree with balanced deletions.
        \item \competitorBTree maintains $CH^\ulcorner(P)$ in a B-tree. A $B$-tree guarantees better memory access patterns during updates and queries.  We use a fork of Google's \texttt{cpp-btree} implementation~\cite{JGRennison_cpp-btree} and augment the $B$-tree to support balanced deletions. 
\end{itemize}

\noindent
Our balanced binary trees do not use finger search (we use binary search in Algorithm~\ref{alg:quickscan} instead of finger search). 
Balanced binary trees can be augmented with additional pointers between all neighbouring nodes at the same tree height to support finger search.
However, maintaining these pointers introduces an additional $O(\log |CH^\ulcorner(P)|)$ overhead during updates which dominates the gains brought by using finger search.

\section{\texorpdfstring{Using the Logarithmic Method. \\}{Using the Logarithmic Method.}}
\label{sec:logarithmic}

The previous section detailed a variety of implementations of insertion-only convex hull algorithms.  
Each tree-based solution requires additional pointers and thus additional storage and overhead during operations.  
In this section, we engineer an insertion-only, vector-based solution—called \logarithmic. This structure will supports all three functionalities at the cost of increased worst-case asymptotic query time.  
Our approach builds on the logarithmic method of Overmars~\cite{overmars1983design}.  
This method transforms any static data structure with construction time $T(n)$ and decomposable query time $Q(n)$ into an insertion-only structure with amortised insertion time $O(\frac{T(n)}{n}  \log n)$. The query time for decomposable queries is $O(Q(n) \log n)$.  
Thus, we immediately obtain an amortised $O(\log^2 n)$-time insertion-only convex hull structure supporting the three decomposable queries in $O(\log^2 n)$ time, by applying any $O(n \log n)$-time static construction algorithm.

\paragraph{Our structure.}
We extend this approach to allow the use of our linear-time Graham scan and to support non-decomposable queries (see Figure~\ref{fig:logarithmic}). 
Let $\ell$ be a parameter and let $\{ B_i \mid i \in [\ell, \log n] \}$ denote a collection of arrays.
Each array $B_i$ is either empty or contains exactly $2^i$ elements. 
The exception is the smallest array $B_\ell$, which has size $2^{\ell + 1}$.
Each $B_i$ stores its points sorted by $x$-coordinate, and we also store $CH^\ulcorner(B_i)$ in an array.

To insert a new point $p$ into $P$, we place $p$ in $B_\ell$ using \competitorVector.
If $B_\ell$ is full, we trigger a \emph{merge operation}.  
A \emph{merge} identifies the largest index $j$ such that all $B_i$ for $i < j$ are non-empty.  
We perform a $j$-way merge (described below) to merge all elements from $\{ B_i \mid i \in [\ell, j] \}$ into $B_j$, filling it with $2^j$ elements sorted by $x$-coordinate.  
We then delete all arrays $B_i$ for $i < j$, and recompute $CH^\ulcorner(B_j)$ using our optimised Graham scan.
Each element participates in amortised $O(\log n)$ merges and so the amortised update time is  $O(\ell + \log n)$.

\paragraph{Implementing $j$-way merge.}
A $j$-way merge takes as input $j$ sorted arrays containing $m$ elements in total and outputs a sorted array of all $m$ elements.  
This can be done in $O(m \log j)$ time.  
Specifically, we iteratively build the result $R$ by repeatedly inserting the smallest remaining element across all input arrays.  
We maintain a pointer to the current minimum in each array and store these pointers in a min-heap.  
By popping the heap, we retrieve the next element in $O(\log j)$ time.  
Sanders~\cite{Sanders2001Fast} shows that the optimal heap implementation for this task is a \emph{loser tree}.

\paragraph{Reducing space and updates.}
We may further reduce the space usage by the following observation:

\begin{restatable}{observation}{bucket}
    Consider a point $p \in B_i$ with $p \notin CH^\ulcorner(B_i)$. Then henceforth, $p$ will never be a vertex on the quarter hull of its respective bucket.
\end{restatable}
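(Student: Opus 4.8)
The plan is to reduce the statement to a monotonicity property of the quarter hull under set inclusion, and then to prove that property using a characterization of $CH^\ulcorner$ in terms of extreme points in a cone of directions.

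First I would observe that the point set stored in $p$'s bucket only ever grows. A merge replaces several consecutive buckets by their union inside a larger bucket $B_j$, so if $p$ currently lies in some bucket $B$, its successor bucket contains $B$ as a subset; no point is ever removed from the bucket holding $p$ without being moved into a strictly larger one. Hence the sequence of point sets that ever contain $p$ is increasing under inclusion, and it suffices to prove the following monotonicity claim: if $S \subseteq S'$ and $p \in S$ with $p \notin CH^\ulcorner(S)$, then $p \notin CH^\ulcorner(S')$.

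Next I would characterize quarter-hull membership by supporting directions. The leftmost point $\ell$ maximizes $\langle \cdot, d\rangle$ over $S$ in direction $d = (-1,0)$, the topmost point $\tau$ maximizes it in direction $(0,1)$, and as $d$ rotates from $(-1,0)$ to $(0,1)$ the maximizer traces exactly the chain $CH^\ulcorner(S)$ from $\ell$ to $\tau$ (this is where Observation~\ref{obs:decreasing} on decreasing positive slopes is used). Thus a vertex $p$ lies on $CH^\ulcorner(S)$ if and only if $p$ is the unique maximizer of $\langle \cdot, d\rangle$ over $S$ for some direction $d$ in the closed up-left quadrant $\{(d_x,d_y) : d_x \le 0,\ d_y \ge 0\}$. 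With this characterization the claim is immediate: if $p \notin CH^\ulcorner(S)$, then for every up-left direction $d$ there is a witness $q \in S \setminus \{p\}$ with $\langle q, d\rangle \ge \langle p, d\rangle$ (in particular $p$ is neither leftmost nor topmost). Since every such witness also lies in $S' \supseteq S$, the point $p$ still fails to be the unique maximizer in $S'$ for every up-left direction, so $p \notin CH^\ulcorner(S')$.

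I expect the main obstacle to be the drifting of the extremal points $\ell$ and $\tau$: a point off $CH^\ulcorner$ may sit to the right of the current $\tau$, hence on $CH^\urcorner$, and after a merge $\tau$ can move rightward, a priori bringing $p$ into the angular range of the quarter hull. The direction-based characterization is precisely what neutralizes this, since it replaces the fragile condition ``$x$-coordinate relative to $\tau$'' by a condition that holds uniformly across the whole cone of up-left directions and is manifestly monotone under adding points. The remaining care is with degeneracies (collinear triples, repeated extreme coordinates); under the general-position assumption standard for these constructions these do not arise, and otherwise they are handled by the usual symbolic tie-breaking.
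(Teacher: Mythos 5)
Your proof is correct, but note that the paper itself offers no proof of this observation: it is stated as self-evident, and the closest the paper comes to an argument is the one-line remark inside the proof of Theorem~\ref{thm:folklore} (``insertion-only, the vertex $v$ can never again appear on $CH(P)$''), which silently equates ``not on the quarter hull'' with ``lies under the hull.'' Your argument is a genuine completion of that intuition, and it addresses the one place where the naive reasoning actually has a gap, which you correctly identify: a point can fail to be on $CH^\ulcorner(B_i)$ while still being a hull vertex (on $CH^\urcorner$ or the lower hull, e.g.\ to the right of $\tau$), and after a merge the topmost point of the bucket can move, so ``lies under the chain from $\ell$ to $\tau$'' is not an obviously monotone condition. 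Your reduction to the inclusion-monotone claim ($S \subseteq S'$ and $p \notin CH^\ulcorner(S)$ implies $p \notin CH^\ulcorner(S')$), combined with the characterization of quarter-hull vertices as unique maximizers of $\langle \cdot, d\rangle$ over the closed cone of up-left directions, handles exactly this: any witness that weakly dominates $p$ in a given direction persists in every superset. Two minor caveats. First, your premise that the bucket containing $p$ only ever grows is literally true only in the \emph{virtual} bucket model: in the implemented variants (both bucketing schemes, and \bucketHull in particular) enclosed points are physically discarded, so the stored sets are not nested; the clean formulation is that you prove the observation for virtual buckets, and the observation is then precisely what licenses the discarding and guarantees the stored quarter hulls coincide with the virtual ones. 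Second, the endpoints of the cone are where general position enters (uniqueness of the maximizers for $(-1,0)$ and $(0,1)$, i.e.\ of the leftmost and topmost points), which you correctly flag and defer to symbolic tie-breaking.
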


\begin{definition}
\label{def:enclosure}
    For each $p \in B_i$ with $p \notin CH^\ulcorner(B_i)$ we say that we  \emph{witness enclosure} of $p$. 
\end{definition}

This leads to two possible bucketing schemes (we implement both) where either:
\begin{enumerate}[noitemsep]
    \item The buckets $\{ B_i \}$ partition the full input set $P$, or
    \item The buckets $\{ B_i \}$ partition only all points $p' \in P$ where did not witness the enclosure of $p'$. 
\end{enumerate}

\noindent
For the first option, we maintain a \emph{virtual size} for each $B_i$, which counts all points assigned to it.  
We explicitly store only the points in $CH^\ulcorner(B_i)$, reducing space.  
E.g., when merging $B_\ell, \ldots, B_{j-1}$ into $B_j$, we store only $CH^\ulcorner(B_j)$ but treat $B_j$ as if it held $2^j$ elements.

For the second option, we assign sequences to buckets based on convex hull size.  
E.g., when merging $B = \bigcup_{i = 0}^{j-1} B_i$, we permanently discard all points not in $CH^\ulcorner(B)$.  
If $|CH^\ulcorner(B)| \in [2^k, 2^{k+1})$, we place its points in $B_k$ in cyclic order. 
The remaining buckets $B_i$ for $i \in [0, j]$ are cleared.  
This results in fewer buckets being used, but changes when an insertion triggers a merge.

\paragraph{Tuning $B_\ell$.}
  
The parameter $\ell$ is freely tunable. Additionally, we maintained $CH^\ulcorner(P)$ using \competitorVector. We also consider using \competitorBTree instead.

\paragraph{Queries.}
We identified six query types. The first three of which are decomposable.  
Hence, any $O(\log n)$-time implementation of these queries yields an $O(\log^2 n)$-time implementation in our setting (by querying each bucket independently).
For the non-decomposable queries, no known method combines outputs from multiple $CH^\ulcorner(B_i)$ into the output on $CH^\ulcorner(P)$ in $O(\log n)$ time.  
Moreover, since there are $O(\log n)$ buckets, querying each convex hull $CH^\ulcorner(B_i)$  separately already takes $O(\log^2 n)$ total time. We show how to combine outputs for queries on $CH^\ulcorner(B_i)$ for $i \in [\lceil \log n \rceil ]$ into the output on $CH^\ulcorner(P)$ in $O(\log^2 n)$ time.  
We focus on query $4$. 
In Appendix~\ref{app:queries} we explain how to implement query 5 and  we give some comments for implementing query 6.
Denote by $A^\ulcorner(P)$ the area under the curve $CH^\ulcorner(P)$.
We show an algorithm to decide for a query point  $q$ whether $q \in A^\ulcorner(P)$. 

\begin{figure}
    \centering
    \includegraphics[width=0.9\linewidth]{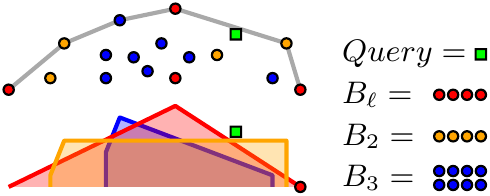}
    \caption{We partition the input $P$ across buckets (indicated by colour). We may maintain for each bucket $B_i$ the hull $CH^+(B_i)$, but we do not maintain $CH^+(P)$.}
    \label{fig:logarithmic}
\end{figure}

\newpage
Our containment query considers each $i \in [\lceil \log n \rceil ]$. We test whether $q \in A^\ulcorner(B_i)$ and consider two cases: 
\begin{enumerate}[noitemsep]
    \item If $q \in A^\ulcorner(B_i)$ then we output that $q \in A^\ulcorner(P)$.
    \item If $q \not \in A^\ulcorner(B_i)$ then $q$ is a vertex of $CH^\ulcorner(B_i \cup \{ q \})$.
\end{enumerate}

Whenever we encounter the second case, we run Algorithm~\ref{alg:quickscan} to quickly find the edges $(u_i, q)$ and $(q, v_i)$ of $CH^\ulcorner(B_i \cup \{ q \})$. 
We then observe the following: 

\begin{lemma}
\label{lem:test}
    Let there be no index $i$ for which $q \in A^\ulcorner(B_i)$.
    Let $U = \bigcup_i u_i$ and $V = \bigcup_i v_i$. Then $q \in A^\ulcorner(P)$ if and only if $q$ lies under an edge $(u, v) \in U \times V$.
\end{lemma}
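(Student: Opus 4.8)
The plan is to reduce the statement to the elementary fact that, for the concave chain $CH^\ulcorner(P)$, a point lies in $A^\ulcorner(P)$ exactly when it lies weakly below that chain, together with the observation that each $line(u_i,q)$ and each $line(q,v_i)$ is a \emph{supporting line} of $CH^\ulcorner(B_i)$, i.e. all of $B_i$ lies weakly below it. Throughout I would assume general position and treat the degenerate boundary situations (when $q$ falls outside the $x$-range $[\ell.x,\tau.x]$ of a chain, or shares an abscissa with a hull vertex) separately. The hypothesis that $q\notin A^\ulcorner(B_i)$ for every $i$ is precisely what guarantees that each tangent pair $(u_i,v_i)$ is well defined, since $q$ is then strictly above every bucket chain, and $u_i.x<q.x<v_i.x$ by construction of the edges $(u_i,q)$ and $(q,v_i)$.

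For the direction ($\Leftarrow$) I would argue by concavity. Suppose some pair $(u,v)\in U\times V$ has $u.x\le q.x\le v.x$ and $q$ lies weakly below $line(u,v)$. Since $u,v\in P$, both points lie weakly below the concave chain $CH^\ulcorner(P)$, and by Observation~\ref{obs:decreasing} that chain is concave, so its chord from $u$ to $v$ lies weakly below the chain on all of $[u.x,v.x]$. At abscissa $q.x$ the point $q$ lies below the chord, which in turn lies below the chain; hence $q$ lies below $CH^\ulcorner(P)$ and so $q\in A^\ulcorner(P)$.

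For the harder direction ($\Rightarrow$) the plan is to assemble, out of the per-bucket tangents, a single pair of \emph{global} supporting rays at $q$. Let $u^\ast$ be the $u_i$ for which $line(u_i,q)$ has \emph{minimum} slope, and let $v^\ast$ be the $v_i$ for which $line(q,v_i)$ has \emph{maximum} slope. The key claim is that every point of $P$ left of $q$ lies weakly below $line(u^\ast,q)$, and symmetrically every point right of $q$ lies weakly below $line(q,v^\ast)$. Indeed, a left point $p\in B_i$ lies below the supporting line $line(u_i,q)$; since $u^\ast$ realises the minimum slope and, for $x<q.x$, a smaller-slope line through $q$ runs higher, $line(u^\ast,q)$ lies above $line(u_i,q)$ at $p.x$, so $p$ lies below $line(u^\ast,q)$ too. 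The symmetric argument gives $v^\ast$. Now I run the slope dichotomy. If $\text{slope}(line(u^\ast,q))\ge \text{slope}(line(q,v^\ast))$, then any line through $q$ of intermediate slope has all of $P$ weakly below it, so $q$ lies weakly above $CH^\ulcorner(P)$ and $q\notin A^\ulcorner(P)$. Contrapositively, $q\in A^\ulcorner(P)$ forces $\text{slope}(line(u^\ast,q))<\text{slope}(line(q,v^\ast))$; but strictly increasing slope across the three points $u^\ast,q,v^\ast$ is exactly the condition that the middle point $q$ lies strictly below the chord $line(u^\ast,v^\ast)$. As $u^\ast\in U$, $v^\ast\in V$, and $u^\ast.x<q.x<v^\ast.x$, this exhibits the required pair.

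The main obstacle I anticipate is the forward direction, and specifically the two ``global'' bounding claims: one must verify that the extremal per-bucket tangents $line(u^\ast,q)$ and $line(q,v^\ast)$ dominate \emph{all} of $P$ on their respective sides, not merely their own buckets, while keeping every slope comparison oriented correctly (a smaller-slope line through $q$ sits higher to the left of $q$ but lower to its right). A secondary, purely technical nuisance is the boundary bookkeeping: ensuring $u^\ast$ and $v^\ast$ exist (at least one bucket point on each side of $q$) and dealing with query points lying beyond $\ell$ or $\tau$, which is exactly where the restriction to a single quarter hull together with the hypothesis $q\notin A^\ulcorner(B_i)$ does the work.
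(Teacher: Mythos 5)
Your proof is correct (at the same level of rigour as the paper's regarding degeneracies, which you flag and defer), and your backward direction is essentially the paper's: both rest on the fact that a chord between two points of $P$ lies weakly below the concave chain $CH^\ulcorner(P)$. Your forward direction, however, takes a genuinely different route. The paper argues by exchange: it starts from a witnessing edge $(x_i, y_j) \in P \times P$ above $q$, replaces $x_i$ by the tangent vertex $u_i$ of its bucket, then $y_j$ by $v_j$, checking at each swap (via the supporting-line property and the slope criterion for three points) that $q$ stays under the edge; the resulting pair $(u_i, v_j)$ is tied to the buckets of the witness. You instead select the globally extremal tangents $u^\ast$ (minimum slope among the $line(u_i,q)$) and $v^\ast$ (maximum slope among the $line(q,v_i)$), show each dominates \emph{all} of $P$ on its side, and conclude by a separation dichotomy. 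Your route is a bit longer but proves something strictly stronger and algorithmically useful: the single canonical pair $(u^\ast, v^\ast)$ already decides the query, i.e.\ $q \in A^\ulcorner(P)$ if and only if $q$ lies under $line(u^\ast, v^\ast)$, so the combination step could test one pair instead of inspecting $U \times V$ or computing $CH^+(U \cup V)$ as the paper suggests. The one point where your argument needs more care than the paper's is the boundary case $slope(line(u^\ast,q)) = slope(line(q,v^\ast))$: there $q$ may lie exactly on $CH^\ulcorner(P)$, so ``weakly above the hull'' does not imply $q \notin A^\ulcorner(P)$ if the region is closed. The fix is to make the first branch of your dichotomy non-strict---non-strictly increasing slopes across $u^\ast, q, v^\ast$ still place $q$ weakly under the chord $(u^\ast, v^\ast)$---and reserve the separation argument for the strictly decreasing case.
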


\begin{proof}
    If  $q$ lies under an edge $(u, v) \in U \times V$ then it follows per the definition of an upper quarter convex hull that $q \in A^\ulcorner(P)$. 
    For proving the other direction, suppose that $q \in A^\ulcorner(P)$.
    Then $q$ lies under some edge $(x_i, y_j) \in P \times P$ and let $x_i \in B_i$ and $y_j \in B_j$.   If $x_i = u_i$ and $y_j = v_j$ then this proves the lemma.
    Otherwise, assume that $x_i \neq u_i$ and observe that $x_i$ lies left of $q$.  
   
    If $x_i \not \in U$ then $x_i \neq u_i$ and, per definition of convex hulls, $u_i$ must lie above $line(x_i, q)$.  
    Since $q$ lies under $line(x_i, y_j)$, both $x_i$ and $u_i$ lie left of $q$, and $u_i$ lies above $line(x_i, q)$, it follows that $q$ also lies under $line(u_i, y_j)$.
    If $y_j = v_j$ then this proves the lemma.
    Otherwise, we may apply the same argument to $B_j$ to conclude that $q$ must lie under $line(u_i, v_j)$ which proves the lemma.   
\end{proof}

This leads to a query algorithm:
For all  $i \in [\lceil \log n \rceil ]$, we query whether $q \in A^\ulcorner(B_i)$.
If we ever encounter case 1, the query terminates. 
Otherwise, we test for all $(u, v) \in U \times V$ whether $q$ lies under $(u, v)$ in $O(\log^2 n)$ total time. By Lemma~\ref{lem:test}, this determines whether $q \in A^\ulcorner(P)$.
In practice, we can speed this process up by computing $CH^+(U \cup V)$ instead of inspecting $U \times V$. 

\section{\texorpdfstring{Experimental Setup. \\}{Experimental Setup.}}
\label{sec:experimental_setup}

We evaluate six algorithms, all implemented in \cpp: 
\begin{itemize}[noitemsep]
    \item \competitorCHTree and \competitorCQTree from~\cite{Gaede2024Simple}, these implement the fully dynamic algorithm of~\cite{overmars1980dynamically};
    \item the three implementations from Section~\ref{sec:implementations};
    \item the \logarithmic from~Section~\ref{sec:logarithmic}.
\end{itemize}

\noindent
Whilst our theory is based on maintaining $CH^\ulcorner(P)$, our implementations maintain and query $CH^+(P)$ to be compatible with prior work. 
For these algorithms, we use the quadratic kernel that we will present in Section~\ref{sec:robustness}. We compare these implementations in terms of speed and memory usage.
All algorithms have fully robust variants based on the exact geometric predicates from CGAL (we discuss these in Appendix~\ref{app:correct:robust}).

\paragraph{Data sources.}
To our knowledge, four published works provide empirical evaluations of (static or dynamic) two-dimensional convex hull algorithms~\cite{Cadenas2019Preprocessing, gamby2018convex, Gaede2024Simple, Mei2016Gang}. We use their data sets, and more.

\paragraph{Real-world data.}
Only one type of real-world data has been previously used in convex hull experiments: The \textbf{Mammal} data sets, used in~\cite{Cadenas2019Preprocessing, Mei2016Gang}. These data sets that consist of 3D voxel representations of mammalian bodies. These are then projected to 2D~\cite{Cadenas2019Preprocessing, Mei2016Gang}.
To increase the amount of real-world data considered, we include the \textbf{Tiger} data set, used for benchmarking range queries~\cite{Govindarajan2003CRB, Danzhou2002Efficient}.
We additionally include the \textbf{Shape} set, used in clustering benchmarks~\cite{Chang2008Robust, Fu2007FLAME, Gionis2007Clustering, Jain2005Data, Zahn1971Graph}.
Although \textbf{Shape} is not real-world data, it is included here since its input size is fixed.

There is a clear reason why real-world data is rarely used in convex hull benchmarking: in practice, 2D convex hulls are typically very small. Often, they are fewer than 1,000 points which heavily skews performance. Our experiments show that, in such scenarios, the optimal insertion-only algorithm is the simple linear-time \competitorVector method.
Since we are also interested in asymptotic and worst-case behaviour, we follow prior work~\cite{Cadenas2019Preprocessing, gamby2018convex, Gaede2024Simple} and evaluate our algorithms on synthetic data, where we can control input complexity.

\paragraph{Synthetic data sets.}
To evaluate asymptotic performance, we use the synthetic data generators described in~\cite{gamby2018convex, Gaede2024Simple}, which produce 2D point sets under four different insertion patterns (we denote their expected convex hull size as a function of the input size $n$ in brackets):

\begin{itemize}[noitemsep]
    \item \textbf{Box:} Sampled uniformly from a square. \hfill ($\Theta(\log n)$)
    \item \textbf{Bell:} Sampled from a 2D Gaussian. \hfill ($\Theta(\log n)$)
    \item \textbf{Disk:} Sampled uniformly from a disk. \hfill ($\Theta(n^{1/3})$)
    \item \textbf{Circle:} Sampled uniformly from a circle. \hfill ($\Theta(n)$)
\end{itemize}
\vspace{-0.25cm}

\paragraph{Methodology.}
 Experiments were run on identical machines,  equipped with a 3.10\,GHz Intel Xeon w5-3435X processor (45\,MB cache) and 128\,GB RAM. We only compare results produced on the same machine.
Each experiment is repeated 5 times. We report the mean of these runs and schedule 4 jobs to execute in parallel and limit memory at 90 GB per process.

\subsection{\texorpdfstring{Organisation of Results. \\}{Organisation of results.}}

Section~\ref{sec:robustness} discusses the impact of using exact value computations as used in the CGAL library. 
Section~\ref{sec:parameter} presents a parameter study to identify the most efficient configuration of our \logarithmic. Specifically, we determine an appropriate value for the bottom-level bucket parameter~$B_\ell$, and evaluate alternative data structures for managing the contents of $B_\ell$. This establishes the versions of \logarithmic that we use in subsequent comparisons.
Section~\ref{sec:realworld} briefly considers real-world data sets.
Section~\ref{sec:ratio} follows~\cite{Gaede2024Simple} and fixes $n = 2^{20}$ and investigates varying query-to-insertion ratios.
Section~\ref{sec:scaling} examines the asymptotic performance of the algorithms under increasing input sizes.
Section~\ref{sec:memory} analyses memory consumption.

\newpage

\subsection{\texorpdfstring{Robustness. }{Robustness}}
\label{sec:robustness}
Algorithmic logic in geometric algorithms relies on \emph{(geometric) predicates}.
A predicate takes as input a set of objects and returns a Boolean value.
Geometric algorithms use such predicates to guide branching decisions. Update and query algorithms considered in this work rely on three predicates that take points as input:

\begin{itemize}[noitemsep]
    \item \texttt{slope}$_<((a, b), (c, d))$ tests whether the slope of ${line}(a, b)$ is strictly less than that of ${line}(c, d)$.
    \item \texttt{above\_line}$((a, b), c)$ tests whether the point $c$ lies strictly above ${line}(a, b)$.
    \item \texttt{lies\_right}$((a, b), (c, d))$ tests whether $c$ lies strictly to the right of the intersection point:
    ${line}(a, b) \cap {line}(c, d)$ (if this point exists).
\end{itemize}

\paragraph{Naïve implementation.}
Given two points $a$ and $b$, the line $line(a, b)$ can be represented as a function $f$ with slope $\frac{b.y - a.y}{b.x - a.x}$ and some intercept. Using this representation, predicates can be evaluated by computing values of $f$. For instance, \texttt{above\_line}$((a, b), c)$ can be implemented by evaluating $f$ at the $x$-coordinate of $c$ and checking whether the result is less than $c$'s $y$-coordinate.
However, this naïve approach is susceptible to precision errors: representing the slope as a floating-point value introduces rounding, and the evaluation of $f$ becomes inexact. As a result, predicates may yield incorrect results, potentially causing algorithmic failures. We refer to such failures as \emph{robustness issues}.
In convex hull maintenance, robustness issues may cause the maintained list $C$ representing $CH^+(P)$ to omit edges that belong on the hull or to include extraneous edges. For queries, they may lead to incorrect answers.
We define the \emph{Naïve} kernel as the implementation of these three predicates using this line-evaluation method. 

\paragraph{Quadratic predicates.}
To avoid such issues, one can derive formulae that evaluate the predicates directly, rather than relying on intermediate representations such as line equations. When evaluating such formulae, higher numerical precision may be required to ensure correctness. In our implementation, robustness is achieved by using at most twice the number of bits per variable (e.g., see the formulae below).
We define the \emph{Quadratic} kernel as the implementation of these three predicates using these formulae. 

\paragraph{Exact computation types.}
The Computational Geometry Algorithms Library (CGAL) provides a robust solution by offering geometric object types and native geometric predicates. 
These predicates are internally evaluated using arbitrary-precision arithmetic as needed.

\begin{align*}
 \hspace{5cm} &\texttt{slope}((a, b), (c, d)) &:=\,&  (b.y - a.y) \cdot (d.x - c.x) < (d.y - c.y) \cdot (b.x - a.x) \\
\hspace{5cm} &\texttt{above\_line}( (a, b), c) &:=\,&  (b.x - a.x)(c.y - b.y) - (c.x - b.x)(b.y - a.y) \geq 0 
\end{align*}

\newpage
  Arguably, these exact computations are CGAL's most important feature. We define the \emph{Exact} kernel as the implementation of these three predicates that invokes CGAL's exact computation.
 More robust computations can incur a computational overhead. Our experimental evaluation investigates the magnitude of this overhead and whether it is justified---i.e., whether robustness issues manifest in practice.

\subsubsection{\texorpdfstring{Experimental Evaluation. \\}{Experimental Evaluation.}}
In Appendix~\ref{app:correct:robust}, we perform an extensive analysis of the three kernels. We address two central questions:
\begin{enumerate}[nolistsep,noitemsep]
    \item Does using a weaker geometric predicate implementation introduce errors?
    \item Does using a weaker geometric predicate implementation provide a speedup?
\end{enumerate}

\paragraph{Comparing predicates using \competitorVector.}
Our first set of experiments, presented in Appendix~\ref{app:correct:robust}, compares implementations of \competitorVector using the Naïve, Quadratic, and CGAL kernels. 
We perform $2^{20}$ insertions and queries on synthetic data sets and reach the following conclusions:
First, the Naïve kernel introduces errors only rarely. The quadratic kernel incurs zero errors.  In our synthetic data sets, the Naïve kernel errors only on the \textbf{Circle} data set where it errors on all runs. 
Second, the Naïve kernel offers no speedup compared to the quadratic one. This is likely because both rely primarily on floating-point multiplication.

\paragraph{Quadratic predicates versus CGAL.}
Based on these observations, we exclude the Naïve kernel from further consideration. Next, we compare our Quadratic kernel with the exact kernel across seven considered algorithms. Due to space constraints, we defer a detailed discussion to Appendix~\ref{app:correct:robust}, and summarise the main findings here. Again the Quadratic kernel never incurs any errors.
As illustrated in Figure~\ref{fig:relative-slowdown}, all methods experience a slowdown when using the exact kernel.
If we exclude \competitorVector on \textbf{Circle} data, where shifting values in the array dominates the running time, then the slowdown is a factor $1.27$-$14.77$ with the median slowdown being a factor $4.27$. 
Since the quadratic kernel makes no errors and is faster, we  therefore conclude that an exact computation kernel is not beneficial in the context of convex hull computations. Notably, our most efficient algorithms are the most affected by changing kernel. This suggests that these methods are bottlenecked by geometric comparisons, in contrast to other algorithms that incur additional overhead from operations such as tree rotations.
Given these experimental results, we opt to use the Quadratic kernel for all algorithms throughout the remainder of this paper.

\begin{table*}
{%
\begin{tabular}[ht]{lrrrrrrrrrrrr}\toprule
&\multicolumn{3}{c}{$|B_\ell|=8$}&\multicolumn{3}{c}{$|B_\ell|=64$}&\multicolumn{3}{c}{$|B_\ell|=512$}&\multicolumn{3}{c}{$|B_\ell|=4096$}\\\cmidrule(lr){2-4}\cmidrule(lr){5-7}\cmidrule(lr){8-10}\cmidrule(lr){11-13}
&\bucketLinearGrowSuffix &\bucketBtreeSuffix&\bucketHullSuffix&\bucketLinearGrowSuffix &\bucketBtreeSuffix&\bucketHullSuffix&\bucketLinearGrowSuffix &\bucketBtreeSuffix&\bucketHullSuffix&\bucketLinearGrowSuffix &\bucketBtreeSuffix&\bucketHullSuffix\\\midrule
Insertions&\\\midrule
box&0.17&0.10&0.07&0.08&0.08&0.05&0.06&0.08&\bfseries 0.05&0.05&0.08&0.05\\
circle&0.37&0.43&0.38&\bfseries 0.34&0.39&0.42&0.66&0.40&1.10&3.43&0.41&6.61\\
\midrule
Queries\\
\midrule
box&0.14&0.11&0.09&0.10&0.10&0.08&0.08&0.10&0.08&\bfseries 0.08&0.10&0.08\\
circle&1.73&0.92&0.89&1.66&0.73&0.70&1.58&0.60&\bfseries 0.53&1.54&1.40&0.98\\\midrule
Overall\\\midrule
box&0.31&0.21&0.16&0.18&0.18&0.13&0.14&0.18&\bfseries 0.13&0.13&0.18&0.13\\
circle&2.10&1.35&1.27&2.00&1.12&1.12&2.25&\bfseries 1.00&1.62&4.97&1.80&7.59 \\
\bottomrule  
\end{tabular}
}
\caption{ 
\texorpdfstring{Our experiments consider two parameters:
the size of $|B_\ell|$ and the three implementation choices    specified at the beginning of Section~\ref{sec:parameter}. 
We show the running time in seconds and the fastest entry is \textbf{bold}.}{Our experiments consider two parameters:
the size of $|B_\ell|$ and the three implementation choices.}}
\label{tab:tuning:k}
\end{table*}

\newpage

\subsection{\texorpdfstring{Parameter study. \\}{Parameter study.}}
\label{sec:parameter}

We continue by examining the best parameters for our algorithmic implementation. 
Following~\cite{Gaede2024Simple}, we adopt a baseline input size of $n = 2^{20}$ for our parameter study. As in prior work, our query experiments focus exclusively on point-containment queries.
In Appendix~\ref{app:parameter:btree}, we determine an appropriate choice of the branching parameter $B$ for our \competitorBTree implementation. Here, we concentrate on configuring the \logarithmic using the extremal \textbf{Box} and \textbf{Circle} data sets. Specifically, we vary the size of the bottom-most bucket $B_\ell$, testing values $|B_\ell| \in \{ 8, 64, 512, 4096 \}$. Additionally, we investigate different strategies for maintaining $CH^+(B_\ell)$ within the bucket $B_\ell$:

\begin{itemize}[noitemsep]
\item \bucketLinearGrow uses \competitorVector;
    \item \bucketHull uses \competitorVector,and additionally discards all points $p' \in P$ for which enclosure has been witnessed (Def.~\ref{def:enclosure}).;
    \item \bucketBtree uses \competitorBTree with $B = 1024$.
\end{itemize}

\noindent
Table~\ref{tab:tuning:k} reports performance results for $2^{20}$ insertions, $2^{20}$ queries, and their combined total. Based on these results, we argue that choosing the size of the bucket $|B_\ell|$ to be $512$ yields the most balanced and robust performance across scenarios.
Perhaps surprisingly, \bucketHull is occasionally slower than its counterparts. We conjecture that this is because deleting points from our bucketing scheme may trigger the expensive merge operation more frequently.  

\begin{table}[H]
    \centering
    \begin{tabular}{c|c|c|c}
      Data set   &  $n$ &  $|CH(P)|$ &  $|CH(P)| / n $ \\
      \hline
        \textbf{Mammal} & 0.5-7.7M & 25-79 & $10^{-5} \,$ \% \\
   \textbf{Tiger} & 17.42-35.9M & 23-31 & $10^{-5} \,$ \% \\
      \textbf{Shape} & 240-3100 & 16-43 & 7.5-13.78 \%         
    \end{tabular}
    \caption{  Characteristics of our real-world data sets. }
    \label{tab:real_world}
\end{table}
\subsection{\texorpdfstring{Real-world data sets. \\}{Real-world data sets.}}
\label{sec:realworld}

We briefly evaluate performance on the real-world \textbf{Mammal} and \textbf{Tiger} data sets. We also include the \textbf{Shape} data sets since these are non-scalable. Each of these contains several instances with varying input sizes $n$ and convex hull sizes $|CH(P)|$ (see Table~\ref{tab:real_world}).

Observe that these hulls are \emph{very} small, which significantly skews performance results shown in Appendix~\ref{app:realworld}. 
Figure~\ref{fig:real_world} gives a representative snapshot of our results in logarithmic scale (see also Table~\ref{tab:real-world-full}).
On such small convex hull sizes, the fully dynamic algorithms \competitorCHTree and \competitorCQTree, which retain all points in $P$, are entirely non-competitive. 
The tree-based methods are fast, but are outperformed by a factor $1.50-2.00$ by our implementations of the logarithmic method. 
The median slowdown is a factor $1.79$ on these data sets.
The linear-time \competitorVector method is a factor $1-1.33$ faster than any \texttt{logarithmic} method with the median speedup being a factor $1.05$. This makes linear-time insertion marginally the best method on these data sets due to its excellent cache locality. 

\begin{figure}[h]
    \centering
    \includegraphics{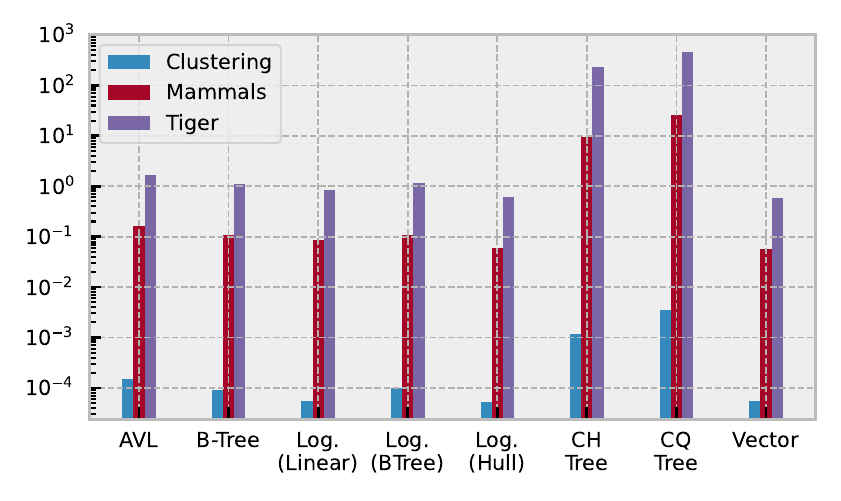}
    \caption{Average running times (s) on real data. }
    \label{fig:real_world}
\end{figure}

\subsection{\texorpdfstring{Query-to-update performance ratio.\\}{Query-to-update performance ratio.}}
\label{sec:ratio}

We evaluate the performance of our insertion-only convex hull algorithms under varying ratios of queries to updates. Following the setup of~\cite{Gaede2024Simple}, we use a baseline configuration of $2^{20}$ points and $2^{20}$ queries on synthetic data. We then vary the query-to-update ratio while keeping the total number of operations fixed at $2^{21}$.
Update inputs are drawn from four synthetic data classes, each consisting of randomly generated point sets within a bounding box of side length 1000.
Complete experimental results are presented in Appendix~\ref{app:ratio}. Representative runtime outcomes are shown in Figure~\ref{fig:compare_ratio_mix}.

\begin{figure}[H]
\includegraphics[width=\linewidth]{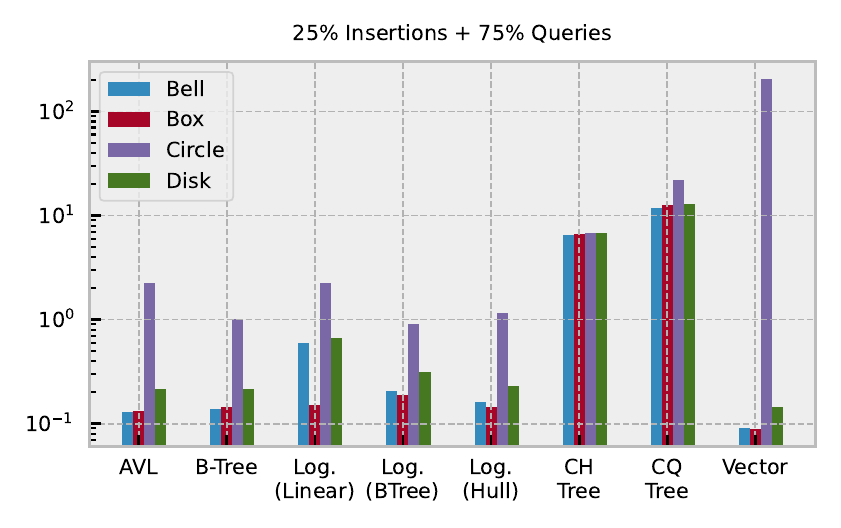}
\includegraphics[width=\linewidth]{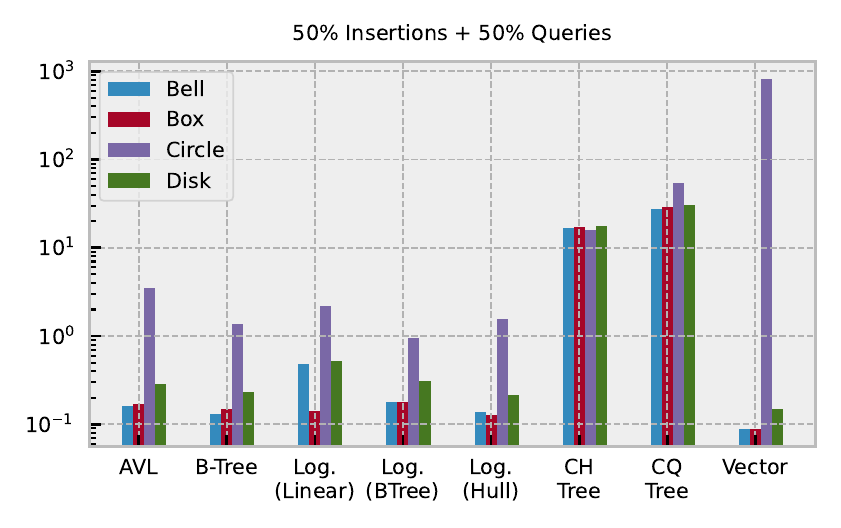}
\includegraphics[width=\linewidth]{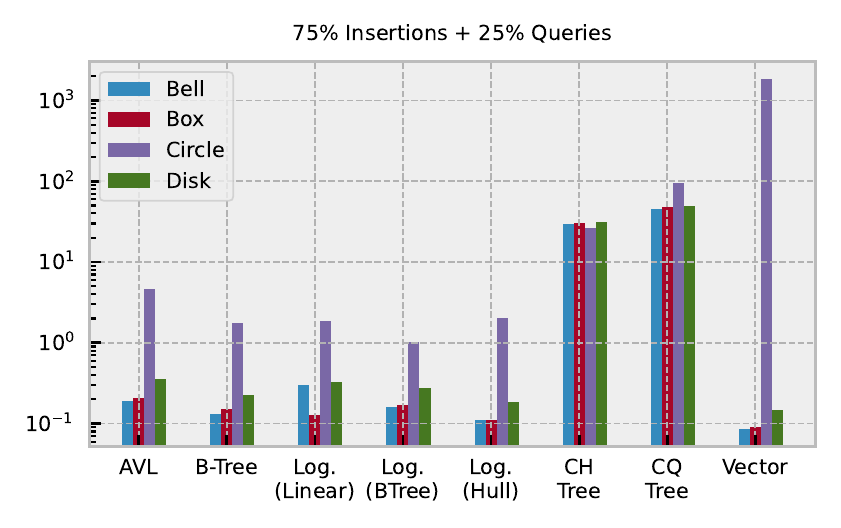}
\caption{Comparing our algorithms for different ratios of insertions to queries with $2^{21}$ total operations. }
\label{fig:compare_ratio_mix}
\end{figure}

\newpage
\paragraph{Results discussion.}
The fully dynamic \competitorCHTree and \competitorCQTree methods are orders of magnitude slower. This is expected, as these methods maintain all points in $P$ within a binary tree structure, even if they do not lie on $CH^+(P)$. They also incur significant overhead from tree balancing operations. In the remainder, we exclude them from consideration.
The comparison between the remaining algorithms is, perhaps surprisingly, largely independent of the query to insertion ratio!

The \competitorVector method, which stores only $CH^+(P)$ in a vector, is optimal in a query-only scenario. However, its insertion cost is linear in the size of $CH^+(P)$.
The extreme efficiency of its queries makes it the fastest method on the \textbf{Bell}, \textbf{Box}, and \textbf{Disk} data sets, even if there are as few as 25 percent queries. 
On \textbf{Bell}, \textbf{Box}, it outperforms all competitors by a factor $1.2$-$6.12$ with the median being a factor $1.78$. As the convex hull size increases, this gap narrows. On the \textbf{Circle} data set, where $|CH^+(P)| \approx 2^{18}$, the insertion cost dominates, and the method becomes a factor $379.72$-$1812.54$ slower than its competitors.

The tree-based \competitorAVL and \competitorBTree exhibit consistent and stable performance. Among them, \competitorBTree is slightly more efficient. However, there is no configuration where either tree-based method is the fastest. The ratio of queries to insertions has minimal effect on their performance, reflecting their $O(\log h)$ complexity for both operations. Instead, performance is strongly influenced by the size of the convex hull. When $h$ is small, i.e. on \textbf{Bell} and \textbf{Box} data, these algorithms are a factor $1.09$-$2.00$ slower than the best \texttt{Logarithmic} implementation. As $h$ grows, they fall behind and on \textbf{Circle} data these algorithms are a factor $1.67$-$4.62$ slower.

We now turn to our algorithms based on the logarithmic method. These algorithms have a theoretical amortised $O(\log n)$ insertion time and $O(\log^2 n)$ query time. Despite this asymptotic disadvantage there exists, for every tested configuration, a logarithmic-method variant that outperforms both tree-based approaches.
Among the three implementations of the logarithmic method, \bucketLinearGrow shows no clear advantage. The remaining two, \bucketHull and \bucketBtree, yield complementary strengths. \bucketHull, which maintains fewer buckets, performs better when the convex hull is small. The gap between our methods appears largely independent of the query ratio. 

We may conclude that our algorithms have distinct performance profiles that depend on the size of the convex hull, but that are largely unaffected by the ratio of queries to insertions.
The exception are the extreme cases (not depicted here) when we are nearing an insertion-only or query-only scenario.

\newpage

\subsection{\texorpdfstring{Performance with Scaling Input-size. \\}{Performance with Scaling Input-Size.}}
\label{sec:scaling}

In the previous section, we demonstrated that our algorithms exhibit different performance profiles across the four synthetic data sets. We now investigate how these performance profiles evolve as the input size increases.
For each of our seven algorithms and each of the four synthetic data sets, we define an experiment consisting of five runs. In each run, we fix the insertion-to-query ratio at 1:1 and vary the number of insertions (and thus queries) in the range $[2^{20}, 2^{26}]$. 

Appendix~\ref{app:scaling} contains the full results, with Table~\ref{tab:scaling} reporting average runtimes (in seconds) for each experiment. Due to the large number of runs, we impose a timeout of 100 seconds per run. Representative results are plotted in Figure~\ref{fig:compare_scaling_mix}, using logarithmic scale.

\paragraph{Results discussion.}
Let $n$ denote the input size and $h = |CH^+(P)|$ the size of the convex hull.
The fully dynamic \competitorCHTree and \competitorCQTree are inefficient and quickly reach the timeout as $n$ increases.
All other algorithms, with the exception of \competitorVector, exhibit logarithmic scaling behaviour. The \competitorVector method also exhibits logarithmic scaling on the \textbf{Bell} and \textbf{Box} data sets. This is because its update time being linear in $h$, and $h$ grows logarithmically with $n$ on these data sets. On the \textbf{Disk} data set, the runtime increases more steeply; however, due to its superior cache locality, the method remains among the most efficient. In contrast, on the \textbf{Circle} data set (where $h \in \Theta(n)$) \competitorVector exceeds the time limit once $n > 2^{20}$.

The tree-based \competitorAVL and \competitorBTree algorithms have update and query time complexity $O(\log h)$. This behaviour is consistent with the experimental data: these methods scale well on the \textbf{Bell} and \textbf{Box} data sets, but their performance degrades when $h$ grows more rapidly with $n$.
In particular, their scaling on the \textbf{Bell} and \textbf{Box} data sets appears to be slightly better than other algorithms. It may be that there exists an input size where these techniques are the preferred method. 

Our logarithmic method algorithms demonstrate clean logarithmic scaling in $n$ across all data sets. This observation supports our claim that the theoretical $O(\log^2 n)$ query time bound is overly pessimistic in practice.
Among these, the \bucketLinearGrow method exhibits slightly inferior scaling on data sets where the convex hull is relatively large. Although the size of the bottom bucket is fixed, linear-time insertions into this bucket appear to introduce significant overhead.

\paragraph{Overall conclusion.}
For fixed input size, our algorithms exhibit distinct performance profiles. As the input size increases, their relative performance differences remain largely consistent. With the exception of \competitorVector on the \textbf{Disk} and \textbf{Circle} data sets, all algorithms demonstrate comparable scaling behaviour.

\begin{figure}
\setlength{\lineskip}{0pt}
\vspace{-0.5cm}
\includegraphics[width=\linewidth]{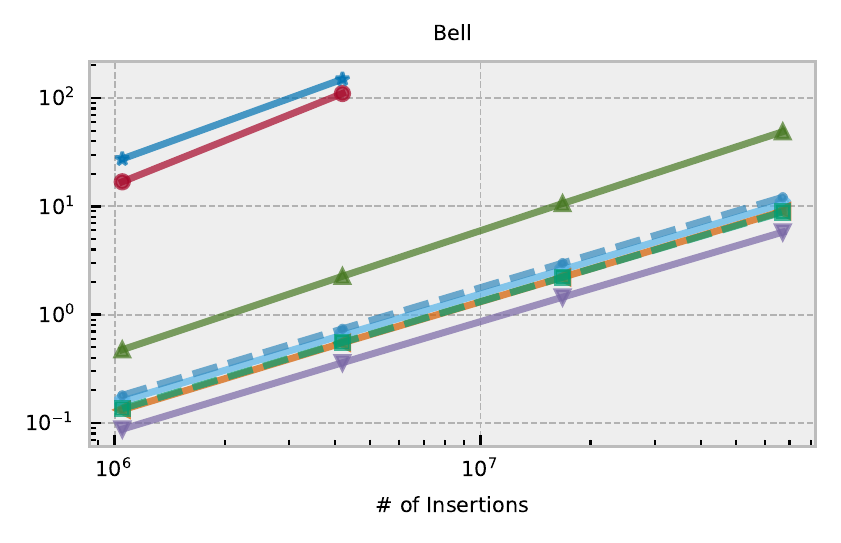}%
\vspace{-0.5cm}
\includegraphics[width=\linewidth]{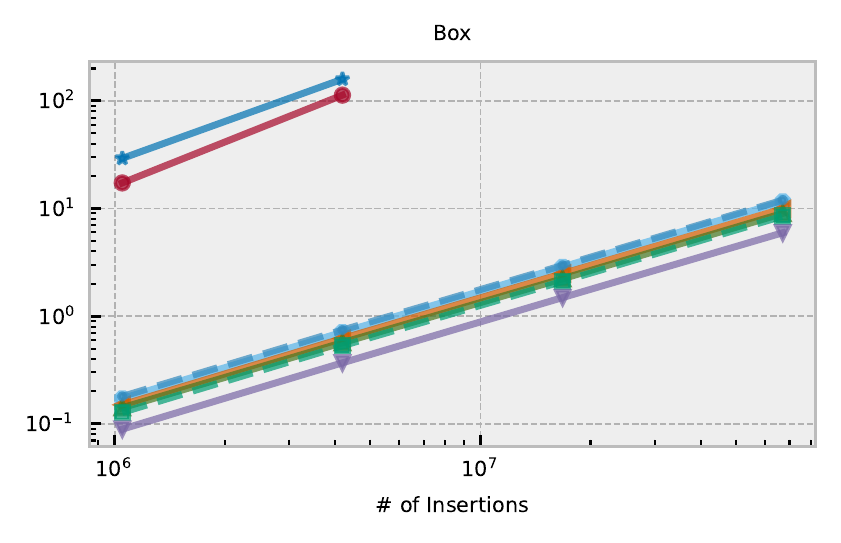}%
\vspace{-0.25cm}
\includegraphics[width=\linewidth]{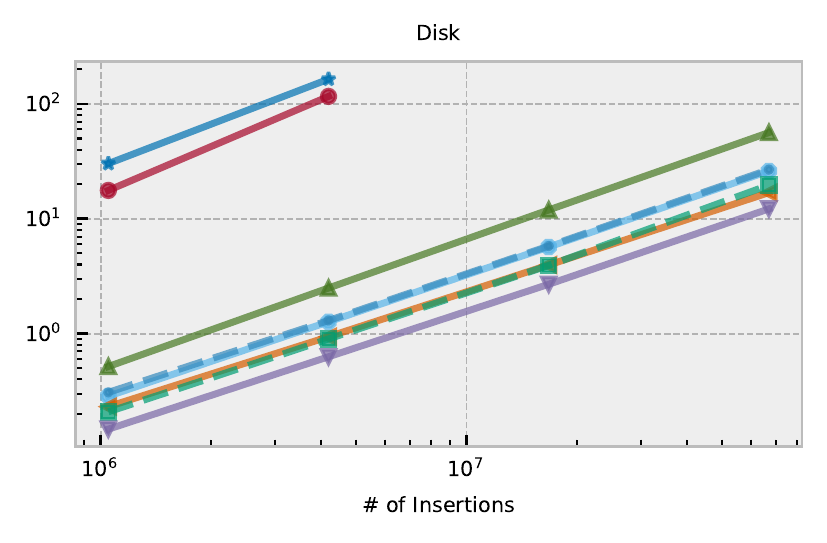}%
\vspace{-0.25cm}
\includegraphics[width=\linewidth]{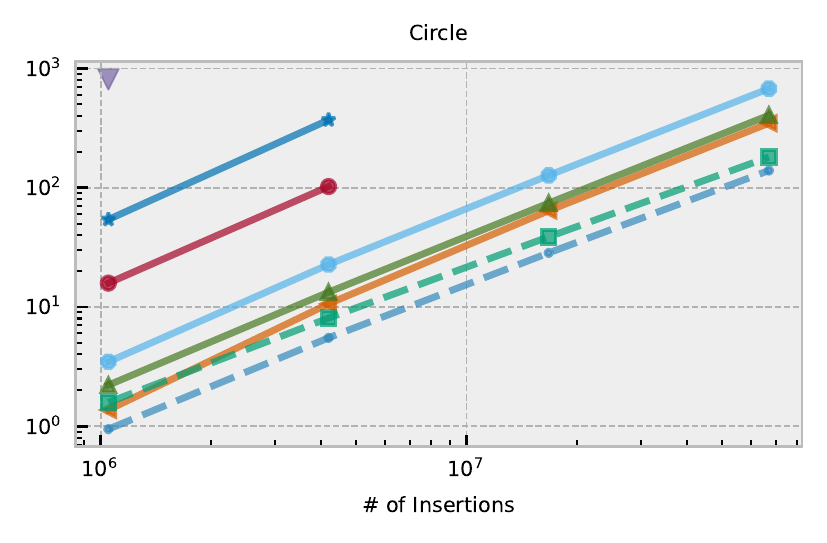}%
\vspace{-0.25cm}
\includegraphics[width=\linewidth]{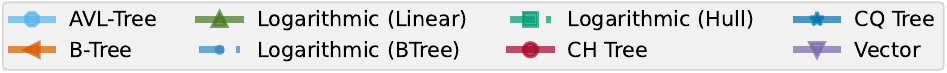}%
\vspace{0.1cm}
\caption{Comparing the various algorithms under increasing input sizes. We use a ratio of 1:1 insertions to queries.}
\label{fig:compare_scaling_mix}
\end{figure}

\newpage

\subsection{\texorpdfstring{Memory Usage. \\}{Memory Usage. }}
\label{sec:memory}

We briefly consider memory usage, focusing exclusively on the synthetic \textbf{Circle} data set, where every point lies on the convex hull, one half in $CH^+(P)$ and $CH^-(P)$ each. This scenario provides the fairest comparison for the fully dynamic \competitorCHTree and \competitorCQTree methods, which cannot discard points that do not appear on the hull.
In this experiment, we insert $2^{20}$ insertions and no queries. 
The full results, reporting peak memory consumption, are provided in Table~\ref{tab:memory_0} in Appendix~\ref{app:memory}. Averaged outcomes are plotted in  Figure~\ref{fig:compare_scaling_mix}.

\paragraph{Results.}
The \competitorVector algorithm has the smallest memory footprint and serves as our baseline. \competitorBTree uses 1.23 times more memory than the baseline, while \competitorAVL consumes 3.4 times more, due to additional pointers and balance counters. Our logarithmic methods are space-efficient, with \bucketLinearGrow being the most compact at 1.49 times the baseline. The \bucketBtreeSuffix and \bucketHullSuffix variants use 2.46 and 2.14 times the memory.
The choice of $n=2^{20}$ impacts the memory consumption of both of these methods, since it is close to a merge operation leaving  buckets empty, but allocated. For a discussion of $n=10^6$ see Appendix~\ref{app:memory}.
Finally, the fully dynamic \competitorCHTree and \competitorCQTree methods exceed the others by using 25.20 and 53.50 times more memory. %

\begin{figure}[H]
\vspace{-0.25cm}
    \centering
    \includegraphics[width=\linewidth]{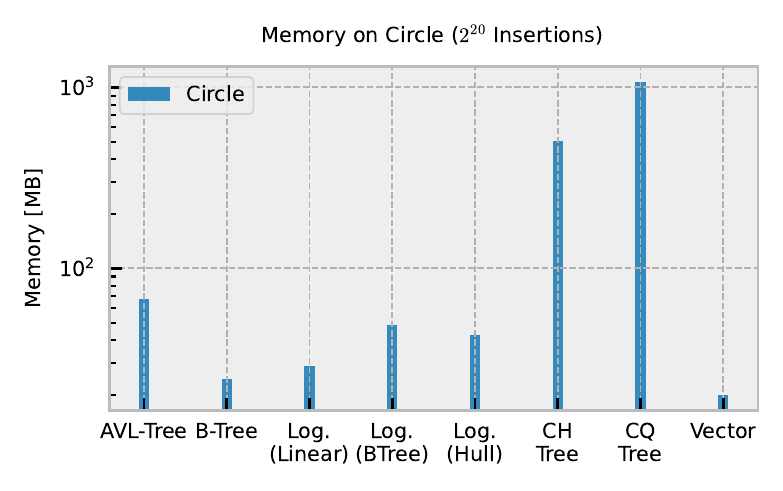}
    \caption{Memory consumption when $n  = 2^{20}$.}
    \label{fig:mem:circle}
   \vspace{-0.25cm}
\end{figure}
\section{\texorpdfstring{Conclusion. \\}{Conclusion.}}
We studied efficient methods for maintaining a convex hull in an insertion-only setting. Our work explores a broad range of algorithmic techniques and robustness considerations, supported by extensive experimental evaluation.
Motivated by the observation that contiguous memory layouts are practically advantageous, we designed a querying data structure based on the logarithmic method~\cite{overmars1983design}. 
We showed that such an implementation required substantial moderations to support the use of Graham scan, and to support non-decomposable queries (see also Appendix~\ref{app:queries}).
Our study led to several surprising insights:

In Section~\ref{sec:robustness}, we addressed robustness concerns. We showed that a theoretically non-robust quadratic kernel never produces errors in practice. The exact computation kernel provided by CGAL incurs substantial performance overhead without yielding observable accuracy benefits. For convex hull computations, we conclude that exact computation kernels offer no practical improvements and come at considerable cost.

An additional unexpected result is that on real-world data sets, the naïve linear-time \competitorVector{} method provides optimal performance. While our set of real-world data is limited, we evaluated against established benchmarks from the convex hull, range query, and clustering literature. We briefly note that other common real-world data sets, such as GPS data sets used in~\cite{buchin2020improved, UNID} also have extremely small convex hulls. These data sets will therefore provide similar~results.

On synthetic data sets, we observed several asymptotic trends. The fully dynamic methods from~\cite{Gaede2024Simple} are not competitive in the insertion-only setting. The $O(h)$ insertion time of \competitorVector{} performs surprisingly well, even on the \textbf{Disk} data set, where $h$ grows roughly as $\sqrt{n}$. Tree-based approaches offer reasonable performance due to their $O(\log h)$ update and query times, but they are never the fastest in any configuration.
Our theoretically less efficient \logarithmic consistently performs well. On nearly all instances, it closely matches the best-performing algorithm. On instances with large convex hulls, it becomes the optimal solution. We therefore regard it as the best asymptotically performing method.

Although our study focuses on the insertion-only model, it yields valuable insights into the fully dynamic setting. First, our robustness analysis carries over to fully dynamic algorithms. Second, while fully dynamic structures are slow, construction algorithms are fast. In settings where insertions and queries dominate and deletions are rare, it may be advantageous to adopt an insertion-only strategy and rebuild the hull entirely upon deletion.
Finally, our analysis highlights the poor practical scaling of tree-based structures, motivating further investigation into vector-based, fully dynamic alternatives. Our logarithmic method shows promise for extension to the fully dynamic setting: by equipping each bucket with a deletion-only structure, one could support deletions while preserving efficient queries. Insertions proceed as before—into the most recent bucket $B_\ell$—while deletions remove points from their respective buckets. Hershberger and Suri~\cite{Hershberger1992semidynamic} present an amortised deletion-only algorithm that could serve as a foundation for such an approach. 
We consider the implementation and integration of this technique with our framework a compelling direction for future work.

\newpage

\bibliographystyle{siamplain}
\bibliography{references}

\begin{thebibliography}{10}

\bibitem{Avis1995How}
{\sc D.~Avis and D.~Bremner}, {\em How good are convex hull algorithms?}, in
  Proceedings of the Eleventh Annual Symposium on Computational Geometry,
  Vancouver, B.C., Canada, June 5-12, 1995, J.~Snoeyink, ed., SCG '95, New
  York, NY, USA, 1995, {ACM}, pp.~20--28,
  \url{https://doi.org/10.1145/220279.220282},
  \url{https://doi.org/10.1145/220279.220282}.

\bibitem{brewer2024Dynamic}
{\sc B.~Brewer, G.~S. Brodal, and H.~Wang}, {\em {Dynamic Convex Hulls for
  Simple Paths}}, in Symposium on Computational Geometry (SoCG), Leibniz
  International Proceedings in Informatics (LIPIcs), Schloss Dagstuhl --
  Leibniz-Zentrum f{\"u}r Informatik, 2024, pp.~24:1--24:15,
  \url{https://doi.org/10.4230/LIPIcs.SoCG.2024.24}.

\bibitem{brodal2018finger}
{\sc G.~S. Brodal}, {\em Finger search trees}, in Handbook of Data Structures
  and Applications, D.~P. Mehta and S.~Sahni, eds., Chapman and Hall/CRC, 2004,
  pp.~171--178, \url{https://doi.org/10.1201/9781420035179.ch11},
  \url{https://doi.org/10.1201/9781420035179.ch11}.

\bibitem{Brodal2000Dynamic}
{\sc G.~S. Brodal and R.~Jacob}, {\em Dynamic planar convex hull with optimal
  query time}, in Algorithm Theory - {SWAT} 2000, 7th Scandinavian Workshop on
  Algorithm Theory, Bergen, Norway, July 5-7, 2000, Proceedings, M.~M.
  Halld{\'{o}}rsson, ed., vol.~1851 of Lecture Notes in Computer Science,
  Berlin, Heidelberg, 2000, Springer, pp.~57--70,
  \url{https://doi.org/10.1007/3-540-44985-X_7},
  \url{https://doi.org/10.1007/3-540-44985-X\_7}.

\bibitem{brodal2002dynamic}
{\sc G.~S. Brodal and R.~Jacob}, {\em Dynamic planar convex hull}, in 43rd
  Symposium on Foundations of Computer Science {(FOCS} 2002), 16-19 November
  2002, Vancouver, BC, Canada, Proceedings, IEEE, {IEEE} Computer Society,
  2002, pp.~617--626, \url{https://doi.org/10.1109/SFCS.2002.1181985},
  \url{https://doi.org/10.1109/SFCS.2002.1181985}.

\bibitem{buchin2020improved}
{\sc K.~Buchin, M.~Buchin, J.~Gudmundsson, J.~Hendriks, E.~H. Sereshgi,
  V.~Sacrist{\'a}n, R.~Silveira, J.~Sleijster, F.~Staals, and C.~Wenk}, {\em
  Improved map construction using subtrajectory clustering}, ACM Workshop on
  Location-Based Recommendations, Geosocial Networks, and Geoadvertising
  (SIGSPATIAL),  (2020), pp.~1--4,
  \url{https://doi.org/10.1145/3423334.3431451}.

\bibitem{Cadenas2019Preprocessing}
{\sc O.~Cadenas and G.~M. Megson}, {\em Preprocessing 2d data for fast convex
  hull computations}, PLOS ONE, 14 (2019), pp.~1--15,
  \url{https://doi.org/10.1371/journal.pone.0212189}.

\bibitem{chan2011three}
{\sc T.~Chan}, {\em Three problems about dynamic convex hulls}, in Symposium on
  Computational Geometry (SoCG), New York, NY, USA, 2011, Association for
  Computing Machinery, \url{https://doi.org/10.1145/1998196.1998201}.

\bibitem{chan2001dynamic}
{\sc T.~M. Chan}, {\em Dynamic planar convex hull operations in
  near-logarithmaic amortized time}, J. {ACM}, 48 (2001), pp.~1--12,
  \url{https://doi.org/10.1145/363647.363652},
  \url{https://doi.org/10.1145/363647.363652}.

\bibitem{Chang2008Robust}
{\sc H.~Chang and D.~Yeung}, {\em Robust path-based spectral clustering},
  Pattern Recognition, 41 (2008), pp.~191--203,
  \url{https://doi.org/10.1016/j.patcog.2007.04.010}.

\bibitem{Chazelle1980Detection}
{\sc B.~Chazelle and D.~P. Dobkin}, {\em Detection is easier than
  computation.}, in ACM Symposium on Theory of Computing (STOC), STOC '80, New
  York, NY, USA, 1980, Association for Computing Machinery, p.~146–153,
  \url{https://doi.org/10.1145/800141.804662},
  \url{https://doi.org/10.1145/800141.804662}.

\bibitem{chi2013distribution}
{\sc Y.~Chi, H.~Hacig{\"{u}}m{\"{u}}s, W.~Hsiung, and J.~F. Naughton}, {\em
  Distribution-based query scheduling}, Proc. {VLDB} Endow., 6 (2013),
  pp.~673--684, \url{https://doi.org/10.14778/2536360.2536367},
  \url{http://www.vldb.org/pvldb/vol6/p673-chi.pdf}.

\bibitem{Friedman1996Efficiently}
{\sc J.~Friedman, J.~Hershberger, and J.~Snoeyink}, {\em Efficiently planning
  compliant motion in the plane}, SIAM Journal on Computing, 25 (1996),
  pp.~562--599, \url{https://doi.org/10.1137/S0097539794263191}.

\bibitem{Fu2007FLAME}
{\sc L.~Fu and E.~Medico}, {\em Flame, a novel fuzzy clustering method for the
  analysis of {DNA} microarray data}, {BMC} Bioinform., 8 (2007), p.~3,
  \url{https://doi.org/10.1186/1471-2105-8-3},
  \url{https://doi.org/10.1186/1471-2105-8-3}.

\bibitem{Furukawa1995}
{\sc S.~Furukawa, S.~Mukap, and M.~Kuroda}, {\em A Convex Hull Algorithm and
  Its Application to Shape Comparison of 3-D Objects}, Springer US, Boston, MA,
  1995, pp.~111--120, \url{https://doi.org/10.1007/978-0-387-34879-7_12}.

\bibitem{gamby2018convex}
{\sc A.~N. Gamby and J.~Katajainen}, {\em Convex-hull algorithms:
  Implementation, testing, and experimentation}, Algorithms,  (2018),
  \url{https://doi.org/10.3390/a11120195}.

\bibitem{gao2017trajectory}
{\sc X.~Gao and F.~Yu}, {\em Trajectory clustering using a new distance based
  on minimum convex hull}, in Joint 17th World Congress of International Fuzzy
  Systems Association and 9th International Conference on Soft Computing and
  Intelligent Systems, {IFSA-SCIS} 2017, Otsu, Japan, June 27-30, 2017, {IEEE},
  2017, pp.~1--6, \url{https://doi.org/10.1109/IFSA-SCIS.2017.8023255},
  \url{https://doi.org/10.1109/IFSA-SCIS.2017.8023255}.

\bibitem{Gionis2007Clustering}
{\sc A.~Gionis, H.~Mannila, and P.~Tsaparas}, {\em Clustering aggregation},
  {ACM} Trans. Knowl. Discov. Data, 1 (2007), p.~4,
  \url{https://doi.org/10.1145/1217299.1217303},
  \url{https://doi.org/10.1145/1217299.1217303}.

\bibitem{giorginis2022fast}
{\sc T.~Giorginis, S.~Ougiaroglou, G.~Evangelidis, and D.~A. Dervos}, {\em Fast
  data reduction by space partitioning via convex hull and {MBR} computation},
  Pattern Recognit., 126 (2022), p.~108553,
  \url{https://doi.org/10.1016/j.patcog.2022.108553},
  \url{https://doi.org/10.1016/j.patcog.2022.108553}.

\bibitem{Govindarajan2003CRB}
{\sc S.~Govindarajan, P.~Agarwal, and L.~Arge}, {\em Crb-tree: An efficient
  indexing scheme for range-aggregate queries}, in Database Theory - ICDT 2003,
  Lecture Notes in Computer Science, Springer, 2003, pp.~143--157,
  \url{https://doi.org/10.1007/3-540-36285-1_10}.

\bibitem{graham1972efficient}
{\sc R.~L. Graham}, {\em An efficient algorithm for determining the convex hull
  of a finite planar set}, Inf. Process. Lett., 1 (1972), pp.~132--133,
  \url{https://doi.org/10.1016/0020-0190(72)90045-2},
  \url{https://doi.org/10.1016/0020-0190(72)90045-2}.

\bibitem{Gaede2024Simple}
{\sc E.~T. Gæde, I.~L. Gørtz, I.~van~der Hoog, C.~Krogh, and E.~Rotenberg},
  {\em Simple and robust dynamic two-dimensional convex hull}, Symposium on
  Algorithm Engineering and Experiments (ALENEX),  (2024), pp.~144--156,
  \url{https://doi.org/10.1137/1.9781611977929.11}.

\bibitem{Hershberger1992semidynamic}
{\sc J.~Hershberger and S.~Suri}, {\em Applications of a semi-dynamic convex
  hull algorithm}, {BIT}, 32 (1992), pp.~249--267,
  \url{https://doi.org/10.1007/BF01994880},
  \url{https://doi.org/10.1007/BF01994880}.

\bibitem{ihm2014approximate}
{\sc S.~Ihm, K.~Lee, A.~Nasridinov, J.~Heo, and Y.~Park}, {\em Approximate
  convex skyline: {A} partitioned layer-based index for efficient processing
  top-k queries}, Knowl. Based Syst., 61 (2014), pp.~13--28,
  \url{https://doi.org/10.1016/j.knosys.2014.01.022},
  \url{https://doi.org/10.1016/j.knosys.2014.01.022}.

\bibitem{Jain2005Data}
{\sc A.~K. Jain and M.~H.~C. Law}, {\em Data clustering: A user's dilemma}, in
  Proceedings of the First International Conference on Pattern Recognition and
  Machine Intelligence, vol.~3776, Springer-Verlag, Berlin, Heidelberg, 2005,
  pp.~1--10, \url{https://doi.org/10.1007/11590316_1},
  \url{https://doi.org/10.1007/11590316_1}.

\bibitem{khosravani2016convex}
{\sc H.~R. Khosravani, A.~E.~B. Ruano, and P.~M. Ferreira}, {\em A convex
  hull-based data selection method for data driven models}, Appl. Soft Comput.,
  47 (2016), pp.~515--533, \url{https://doi.org/10.1016/j.asoc.2016.06.014},
  \url{https://doi.org/10.1016/j.asoc.2016.06.014}.

\bibitem{liparulo2015fuzzy}
{\sc L.~Liparulo, A.~Proietti, and M.~Panella}, {\em Fuzzy clustering using the
  convex hull as geometrical model}, Adv. Fuzzy Syst., 2015 (2015),
  \url{https://doi.org/10.1155/2015/265135},
  \url{https://doi.org/10.1155/2015/265135}.

\bibitem{Danzhou2002Efficient}
{\sc D.~Liu, E.-P. Lim, and W.-K. Ng}, {\em Efficient k nearest neighbor
  queries on remote spatial databases using range estimation}, in International
  Conference on Scientific and Statistical Database Management, 2002,
  pp.~121--130, \url{https://doi.org/10.1109/SSDM.2002.1029712}.

\bibitem{liu2019fast}
{\sc K.~Liu and J.~Wang}, {\em Fast dynamic vehicle detection in road scenarios
  based on pose estimation with convex-hull model}, Sensors,  (2019),
  \url{https://doi.org/10.3390/s19143136}.

\bibitem{margineantu1997pruning}
{\sc D.~D. Margineantu and T.~G. Dietterich}, {\em Pruning adaptive boosting},
  in Proceedings of the Fourteenth International Conference on Machine Learning
  {(ICML} 1997), Nashville, Tennessee, USA, July 8-12, 1997, D.~H. Fisher, ed.,
  Citeseer, Morgan Kaufmann, 1997, pp.~211--218.

\bibitem{Mei2016Gang}
{\sc G.~Mei}, {\em Cudachain: an alternative algorithm for finding 2d convex
  hulls on the gpu}, SpringerPlus, 5 (2016), pp.~1--26,
  \url{https://doi.org/10.1186/s40064-016-2284-4}.

\bibitem{UNID}
{\sc T.~Moers, L.~Vater, R.~Krajewski, J.~Bock, A.~Zlocki, and L.~Eckstein},
  {\em The exid dataset: A real-world trajectory dataset of highly interactive
  highway scenarios in germany}, 2022 IEEE Intelligent Vehicles Symposium (IV),
   (2022), pp.~958--964, \url{https://doi.org/10.1109/IV51971.2022.9827305}.

\bibitem{mouratidis2017geometric}
{\sc K.~Mouratidis}, {\em Geometric approaches for top-k queries}, Proc. {VLDB}
  Endow., 10 (2017), pp.~1985--1987,
  \url{https://doi.org/10.14778/3137765.3137826},
  \url{http://www.vldb.org/pvldb/vol10/p1985-mouratidis.pdf}.

\bibitem{ostrouchov2005fastmap}
{\sc G.~Ostrouchov and N.~F. Samatova}, {\em On fastmap and the convex hull of
  multivariate data: Toward fast and robust dimension reduction}, {IEEE} Trans.
  Pattern Anal. Mach. Intell., 27 (2005), pp.~1340--1343,
  \url{https://doi.org/10.1109/TPAMI.2005.164},
  \url{https://doi.org/10.1109/TPAMI.2005.164}.

\bibitem{overmars1983design}
{\sc M.~H. Overmars}, {\em The Design of Dynamic Data Structures}, vol.~156 of
  Lecture Notes in Computer Science, Springer, 1983,
  \url{https://doi.org/10.1007/BFb0014927},
  \url{https://doi.org/10.1007/BFb0014927}.

\bibitem{overmars1980dynamically}
{\sc M.~H. Overmars and J.~van Leeuwen}, {\em Dynamically maintaining
  configurations in the plane (detailed abstract)}, in Proceedings of the 12th
  Annual {ACM} Symposium on Theory of Computing, April 28-30, 1980, Los
  Angeles, California, {USA}, R.~E. Miller, S.~Ginsburg, W.~A. Burkhard, and
  R.~J. Lipton, eds., {ACM}, 1980, pp.~135--145,
  \url{https://doi.org/10.1145/800141.804661},
  \url{https://doi.org/10.1145/800141.804661}.

\bibitem{Preparata1979}
{\sc F.~Preparata}, {\em An optimal real-time algorithm for planar convex
  hulls}, Communications of the ACM, 22 (1979),
  \url{https://doi.org/10.1145/359131.359132}.

\bibitem{JGRennison_cpp-btree}
{\sc J.~G. Rennison}, {\em cpp-btree}.
\newblock \url{https://github.com/JGRennison/cpp-btree}, 2024.
\newblock Accessed: 2025-05-25.

\bibitem{sander1998density}
{\sc J.~Sander, M.~Ester, H.~Kriegel, and X.~Xu}, {\em Density-based clustering
  in spatial databases: The algorithm {GDBSCAN} and its applications}, Data
  Min. Knowl. Discov., 2 (1998), pp.~169--194,
  \url{https://doi.org/10.1023/A:1009745219419},
  \url{https://doi.org/10.1023/A:1009745219419}.

\bibitem{Sanders2001Fast}
{\sc P.~Sanders}, {\em Fast priority queues for cached memory}, ACM Journal on
  Experimental Algorithmics, 5 (2001),
  \url{https://doi.org/10.1145/351827.384249}.

\bibitem{Wang2023Dynamic}
{\sc H.~Wang}, {\em Dynamic convex hulls under window-sliding updates}, in
  Workshop on Algorithms and Data Structures (WADS), Springer Nature
  Switzerland, 2023, \url{https://doi.org/10.1007/978-3-031-38906-1_46}.

\bibitem{Wang2007Supervised}
{\sc Y.~Wang, L.~Wu, L.~Chen, and X.~Zhang}, {\em Supervised classification of
  protein structures based on convex hull representation}, Int. J. Bioinform.
  Res. Appl., 3 (2007), pp.~123--144,
  \url{https://doi.org/10.1504/IJBRA.2007.013598},
  \url{https://doi.org/10.1504/IJBRA.2007.013598}.

\bibitem{yan2011efficient}
{\sc D.~Yan, Z.~Zhao, and W.~Ng}, {\em Efficient algorithms for finding optimal
  meeting point on road networks}, Proc. {VLDB} Endow., 4 (2011), pp.~968--979,
  \url{http://www.vldb.org/pvldb/vol4/p968-yan.pdf}.

\bibitem{Zahn1971Graph}
{\sc C.~Zahn}, {\em Graph-theoretical methods for detecting and describing
  gestalt clusters}, IEEE Transactions on Computers, 100 (1971), pp.~68--86,
  \url{https://doi.org/10.1109/T-C.1971.223083}.

\end{thebibliography}

\newpage
\appendix

\section{\texorpdfstring{Robustness. \\}{Robustness.}}\label{app:correct:robust}
In Section~\ref{sec:robustness}, we explained that convex hull maintenance and queries rely on geometric predicates. We introduced three kernels that implement these predicates:

\begin{enumerate}[noitemsep]
    \item A \emph{naïve kernel}, which evaluates line equations directly using floating-point arithmetic without robustness guarantees;
    \item A \emph{quadratic kernel}, which rewrites each predicate as a quadratic formula evaluated using standard floating-point arithmetic. These evaluations are robust as long as intermediate results do not overflow;
    \item An \emph{exact kernel}, which uses the CGAL library for arbitrary-precision exact predicate evaluation.
\end{enumerate}

For these three implementations, we consider the following two key questions:
(1) Does using a weaker geometric predicate implementation introduce errors?
(2) Does using a weaker geometric predicate implementation provide a speedup?

\paragraph{Comparing predicates using Graham Scan.}
We begin with a controlled experiment, implementing the \competitorVector method using each of the three kernels: Naïve, Quadratic, and CGAL. The aim of this experiment is to reduce the number of kernel configurations we consider in subsequent evaluations. We compare the three implementations on four synthetic data sets—\textbf{Box}, \textbf{Bell}, \textbf{Disk}, and \textbf{Circle}—as introduced in Section~\ref{sec:experimental_setup}. Each experiment consists of ten runs that involve $2^{20}$ insertions and queries.
Whenever we witness a deviation from the exact kernel, we terminate the run and report an error. We perform ten runs per experiment and report the fraction of runs that resulted in an error. While ten runs may seem low, each run is computationally intensive. 

Table~\ref{tab:robustness:accurate} reports the error rates observed for each kernel, and Table~\ref{tab:robustness:runtime} gives the average runtime per run in seconds.
For the \textbf{Box}, \textbf{Bell}, \textbf{Disk} data sets, where the convex hull has a significantly smaller size, none of the kernels exhibits robustness issues. On the \textbf{Circle} data set, where the convex hull contains approximately $2^{20}$ points, the Naïve kernel always has robustness issues and the Quadratic kernel is always correct. 

Since the Quadratic kernel never incurs any robustness issues, we argue that in practical scenarios, robustness issues no concern for convex hull computations. 
In terms of runtime, the Naïve and Quadratic kernels are nearly indistinguishable, most likely because both rely primarily on floating-point multiplication. In contrast, the exact kernel incurs a significant slowdown. An exception is the \textbf{Circle} data set, where insertion times are dominated by the linear-time insertion, making geometric comparisons less of a bottleneck.

\begin{table}[]
    \centering
    \begin{tabular}{|c|c|c|c|}
      Data set   & Naïve & Quadratic & Exact (CGAL) \\
      \hline
       \textbf{Box}  &  0 \% & 0 \% & 0 \% \\
       \textbf{Bell} & 0 \% & 0 \% & 0 \% \\
       \textbf{Disk}  & 0 \% & 0 \% & 0 \% \\
       \textbf{Circle} &  100 \% & 0 \% & 0 \% 
    \end{tabular}
    \caption{The error rate of our three predicate implementations on our four synthetic data sets.}
    \label{tab:robustness:accurate}
\end{table}

\begin{table}[]
    \centering
    \begin{tabular}{|c|c|c|c|}
      Data set   & Naïve & Quadratic & Exact (CGAL) \\
      \hline
       \textbf{Box}   & 0.10  & 0.9  & 0.57  \\
       \textbf{Bell} & 0.10  & 0.9  & 0.57  \\
       \textbf{Disk}  & 0.15  & 0.16  & 0.58  \\
       \textbf{Circle} &  813.85  & 813.64  & 815.17 \
    \end{tabular}
    \caption{The runtime in seconds of \competitorVector after $2^{20}$ insertions and queries using our three predicate implementations on our four synthetic data sets.}
    \label{tab:robustness:runtime}
\end{table}

\begin{figure}[H]
\vspace{-0.25cm}
    \centering
    \includegraphics[width=\linewidth]{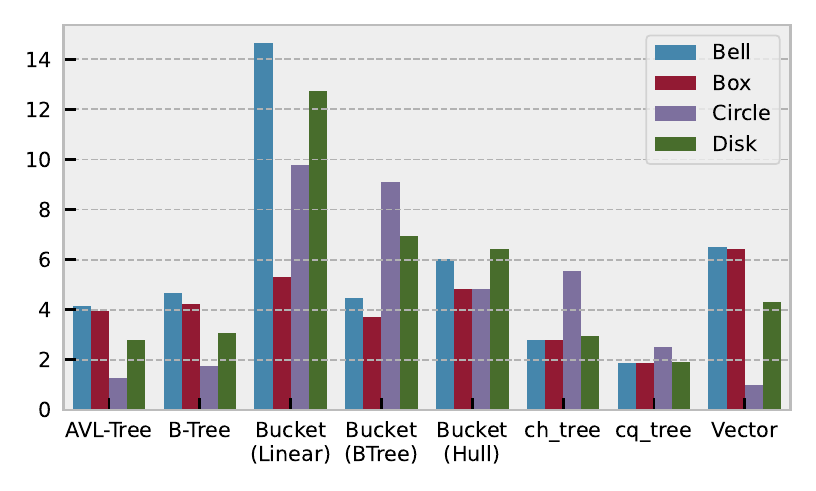}
    \caption{Relative Slowdown for each implementation (Lower is better). $2^{20}$ insertions and queries each.}
    \label{fig:relative-slowdown}
\vspace{-0.25cm}
\end{figure}
\vspace{-0.25cm}
\paragraph{Quadratic versus Exact kernels.}
Since the Naïve kernel offers no benefit over the Quadratic kernel, we discard it from further evaluation. We examine the overhead of using CGAL’s exact kernel versus the Quadratic kernel across all seven of our algorithms. The full results are presented in Table~\ref{tab:res:robustness}. Figure~\ref{fig:relative-slowdown} shows the relative slowdown on a logarithmic scale.
Several insights emerge from this comparison:

First, the slowdown introduced by the exact kernel depends strongly on both the algorithm and the data set. Algorithms that store points contiguously in memory (e.g., \bucketNameBase methods and \competitorVector) suffer far greater slowdowns than tree-based methods. %
Second, all algorithms exhibit a noticeable performance degradation when using the exact kernel.
Our \bucketNameBase algorithms pay a higher price since more exact queries are needed in every separate bucket.
Given that the Quadratic kernel incurs no correctness issues on all experiments, we conclude that the CGAL exact kernel is overly expensive in the context of convex hull maintenance and query workloads. As such, we adopt the Quadratic kernel for all subsequent experiments.

\begin{sidewaystable}[]
    \centering
    \begin{tabular}{lrrrrrrrrrrrrrrrrrr}
    \toprule%
    &\multicolumn{4}{c}{}&\multicolumn{6}{c}{\bucketNameBase}\\\cmidrule(lr){6-11}
    &\multicolumn{2}{c}{\competitorAVL}&\multicolumn{2}{c}{\competitorBTree}&\multicolumn{2}{c}{\bucketLinearGrowSuffix}&\multicolumn{2}{c}{\bucketBtreeSuffix}&\multicolumn{2}{c}{\bucketHullSuffix}&\multicolumn{2}{c}{\competitorCHTree}&\multicolumn{2}{c}{\competitorCQTree}&\competitorTerrible&\multicolumn{2}{c}{\competitorVector}\\\cmidrule(lr){2-3}\cmidrule(lr){4-5}\cmidrule(lr){6-7}\cmidrule(lr){8-9}\cmidrule(lr){10-11}\cmidrule(lr){12-13}\cmidrule(lr){14-15}\cmidrule(lr){16-16}\cmidrule(lr){17-18}
    &E&I&E&I&E&I&E&I&E&I&E&I&E&I&&E&I\\\midrule
bell&0.63&0.15&0.62&0.13&6.94&0.47&0.81&0.18&0.82&0.14&45.46&16.35&51.22&27.07&0.10&0.57&\bfseries 0.09\\
box&0.65&0.17&0.64&0.15&0.75&0.14&0.67&0.18&0.61&0.13&47.43&16.86&54.26&28.66&0.10&0.58&\bfseries 0.09\\
circle&4.46&3.50&2.35&1.35&21.68&2.22&8.52&\bfseries 0.94&7.62&1.58&88.67&15.98&134.85&53.93&813.83&815.31&813.87\\
disk&0.76&0.27&0.71&0.23&6.64&0.52&2.13&0.31&1.37&0.21&51.05&17.20&57.68&29.90&0.16&0.64&\bfseries 0.15\\

\bottomrule
    \end{tabular}
    \caption{I denotes the inexact method using the quadratic predicate kernel. E denotes the exact computation kernel using CGAL. }
    \label{tab:res:robustness}
\end{sidewaystable}

\newpage

\section{\texorpdfstring{Data for real-world data sets. \\}{Data for real-world data sets.}}
\label{app:realworld}

We briefly evaluate performance of our insertion-only algorithms on the real-world \textbf{Mammal} and \textbf{Tiger} data sets. We also include the \textbf{Shape} data set, which, while synthetic, is non-scalable and hence included under the umbrella term “real-world” for convenience. Each data set comprises multiple instances with varying input sizes $n$ and convex hull sizes $|CH(P)|$.

Table~\ref{tab:real_world_2} summarises the key characteristics of these data sets. The primary observation is that the convex hull is \emph{very} small across all instances, which strongly influences the relative performance of our algorithms.

\begin{table}[H]
    \centering
    \begin{tabular}{c|c|c|c}
      Data set   &  $n$ &  $|CH(P)|$ &  $|CH(P)| / n $ \\
      \hline
        \textbf{Mammal} & 0.5-7.7M & 25-79 & $10^{-5} \,$ \% \\
   \textbf{Tiger} & 17.42-35.9M & 23-31 & $10^{-5} \,$ \% \\
      \textbf{Shape} & 240-3100 & 16-43 & 7.5-13.78 \%         
    \end{tabular}
    \caption{  Characteristics of our real-world data sets. }
    \label{tab:real_world_2}
\end{table}

\begin{table*}\vspace{-0.75cm}
{%
    \begin{tabular}{llS[table-format=2.2]S[table-format=2.2]S[table-format=2.2]S[table-format=2.2]S[table-format=2.2]S[table-format=2.2]S[table-format=2.2]S[table-format=4.2]}
    \toprule 
    &&&&\multicolumn{3}{c}{{{{\bucketNameBase}}}}\\\cmidrule(lr){5-7}
    Type&Name&{{{\competitorAVL}}}&{{{\competitorBTree}}}&{{{\bucketLinearGrowSuffix}}}&{{{\bucketBtreeSuffix}}}&{{{\bucketHullSuffix}}}&{{{\competitorCHTree}}}&{{{\competitorCQTree}}}&{{{\competitorVector}}}\\
    \midrule

  \bottomrule
  \end{tabular}
    }
    \caption{Average running time per insert or query on real world instance in seconds. Bold entries are the optimal running time. DNF denotes that the entire experiment did not finish within 24 hours. OOM if more than 90 GB of memory was used. }
    \label{tab:real-world-full}
    
\end{table*}

\paragraph{Result discussion.}
Table~\ref{tab:real-world-full} reports the complete set of experimental results. Each experiment evaluates a single algorithm across the various data sets, with results averaged over five runs per instance. The reported values represent average runtime per instance in seconds. Below, we discuss the performance of the algorithms in increasing order of efficiency.

The fully dynamic \competitorCHTree and \competitorCQTree algorithms, which retain all points in $P$, are entirely non-competitive on these data sets. These methods maintain binary trees with millions of points, despite the convex hull containing fewer than 100 points in most cases.
We must emphasise that these methods are fully dynamic in contrast to the subsequent faster methods. 

The tree-based \competitorAVL and \competitorBTree algorithms, which store only the convex hull points, perform substantially better.

Surprisingly, our algorithms based on the logarithmic method outperform the tree-based approaches. Recall that \bucketLinearGrow and \bucketBtree partition the full point set $P$ into $O(\log n)$ buckets.
Denote by $n$ the size of $P$ and by $h$ the size of $CH^+(P)$. 
Although the amortised update time for these methods is $O(\log n)$ compared to $O(\log h)$ for the tree-based algorithms, and $h << n$, the overhead from tree balancing appears to outweigh the theoretical advantage. This underscores the substantial runtime cost associated with maintaining balanced trees.

Among the logarithmic methods, \bucketHull achieves the best performance. This is to be expected, as \bucketHull discards enclosed points and thus maintains a more compact structure. In this setting where the convex hulls are very small compared to $n$, this significantly reducing both space usage and computational overhead. Nevertheless, all logarithmic-method variants remain competitive with one another.

Finally, the linear-time \competitorVector method emerges as the clear winner on these data sets, owing to its excellent cache locality and low overhead. Given that the convex hull is typically small in real-world data, we argue that \competitorVector is an optimal choice in such scenarios.

\section{\texorpdfstring{Varying the query-to-update ratio. \\ }{Varying the query-to-update ratio.}}
\label{app:ratio}

We evaluate the performance of our insertion-only convex hull algorithms under varying ratios of queries to updates. Following the setup of~\cite{Gaede2024Simple}, we use a baseline configuration of $2^{20}$ points and $2^{20}$ queries. We then vary the query-to-update ratio while keeping the total number of operations fixed at $2^{21}$.
Update inputs are drawn from four synthetic data classes, each consisting of randomly generated point sets within a bounding box of side length 1000.

\paragraph{Results discussion.}
Across all data sets, the fully dynamic \competitorCHTree and \competitorCQTree methods are orders of magnitude slower. This is expected, as these methods maintain all points in $P$ within a binary tree structure, even if they do not lie on $CH^+(P)$. They also incur significant overhead from frequent tree balancing operations.
The comparison between the remaining algorithms is, perhaps surprisingly, largely independent of the query to insertion ratio!

The \competitorVector method, which stores only $CH^+(P)$ in a vector, is optimal in a query-only scenario. However, its insertion cost is linear in the size of $CH^+(P)$. The extreme efficiency of its queries makes it the fastest method on the \textbf{Bell}, \textbf{Box}, and \textbf{Disk} data sets, even if there are as few as 25 percent queries. 
On \textbf{Bell}, \textbf{Box}, it outperforms all competitors by a factor $1.2$-$6.12$ with the median being a factor $1.78$. As the convex hull size increases, this gap narrows. On the \textbf{Circle} data set, where $|CH^+(P)| \approx 2^{18}$, the insertion cost dominates, and the method becomes a factor $379.72$-$1812.54$ slower than its competitors.

The tree-based \competitorAVL and \competitorBTree algorithms exhibit consistent and stable performance. Among them, \competitorBTree is slightly more efficient. However, there is no configuration where either tree-based method is the fastest. The ratio of queries to insertions has minimal effect on their performance, reflecting their $O(\log h)$ complexity for both operations. Instead, performance is strongly influenced by the size of the convex hull. When $h$ is small, i.e. on \textbf{Bell} and \textbf{Box} data, these algorithms are a factor $1.09$-$2.00$ slower than the best \texttt{Logarithmic} implementation. As $h$ grows, they fall behind and on \textbf{Circle} data these algorithms are a factor $1.67$-$4.62$ slower.

We now turn to our algorithms based on the logarithmic method. These algorithms have a theoretical amortised $O(\log n)$ insertion time and $O(\log^2 n)$ query time. Despite this asymptotic disadvantage there exists, for every tested configuration, a logarithmic-method variant that outperforms both tree-based approaches.
Among the three implementations of the logarithmic method, \bucketLinearGrow shows no clear advantage. The remaining two, \bucketHull and \bucketBtree, yield complementary strengths. \bucketHull, which deletes any point $p' \in P$ for which it witnesses enclosure, performs better when the convex hull is small. The gap between our methods appears largely independent of the query ratio. As the convex hull becomes large, \bucketHull becomes increasingly expensive.

We may conclude that our algorithms have distinct performance profiles that depend on the size of the convex hull, but that are largely unaffected by the ratio of queries to insertions.
The exception are the extreme cases (not depicted here) when we are nearing an insertion-only or query-only scenario. 
\newpage

\begin{table}[H]
{%
    \begin{tabular}{lllS[table-format=2.2]S[table-format=2.2]S[table-format=2.2]S[table-format=2.2]S[table-format=2.2]S[table-format=2.2]S[table-format=2.2]S[table-format=4.2]}
    \toprule 
    &&\# True &&&\multicolumn{3}{c}{{{{\bucketNameBase}}}}\\\cmidrule(lr){6-8}
    &Seed&Queries&{{{\competitorAVL}}}&{{{\competitorBTree}}}&{{{\bucketLinearGrowSuffix}}}&{{{\bucketBtreeSuffix}}}&{{{\bucketHullSuffix}}}&{{{\competitorCHTree}}}&{{{\competitorCQTree}}}&{{{\competitorVector}}}\\
    \midrule
bell&\numprint{0}&\numprint{398307}&0.19&0.12&0.29&0.15&0.11&29.55&44.54&\bfseries 0.08\\
bell&\numprint{1}&\numprint{405326}&0.19&0.13&0.29&0.15&0.11&29.54&44.82&\bfseries 0.09\\
bell&\numprint{2}&\numprint{398217}&0.19&0.14&0.30&0.17&0.11&29.54&44.90&\bfseries 0.09\\
bell&\numprint{3}&\numprint{392900}&0.20&0.14&0.30&0.17&0.11&29.50&45.30&\bfseries 0.09\\
bell&\numprint{4}&\numprint{372110}&0.18&0.12&0.31&0.15&0.11&29.46&45.23&\bfseries 0.08\\
bell&\numprint{5}&\numprint{383752}&0.19&0.14&0.29&0.16&0.11&29.53&45.35&\bfseries 0.09\\
bell&\numprint{6}&\numprint{387694}&0.19&0.13&0.31&0.16&0.11&29.45&45.19&\bfseries 0.09\\
bell&\numprint{7}&\numprint{411733}&0.20&0.14&0.30&0.17&0.12&29.51&45.15&\bfseries 0.09\\
bell&\numprint{8}&\numprint{414112}&0.18&0.14&0.29&0.16&0.11&29.44&45.21&\bfseries 0.08\\
bell&\numprint{9}&\numprint{378721}&0.19&0.13&0.31&0.16&0.11&29.36&44.96&\bfseries 0.09\\
box&\numprint{0}&\numprint{524270}&0.22&0.16&0.13&0.18&0.12&30.46&47.54&\bfseries 0.10\\
box&\numprint{1}&\numprint{524281}&0.20&0.14&0.13&0.16&0.10&30.37&47.94&\bfseries 0.08\\
box&\numprint{2}&\numprint{524272}&0.20&0.14&0.13&0.16&0.11&30.22&47.95&\bfseries 0.09\\
box&\numprint{3}&\numprint{524275}&0.20&0.15&0.12&0.16&0.11&30.20&47.99&\bfseries 0.09\\
box&\numprint{4}&\numprint{524270}&0.22&0.17&0.13&0.19&0.12&30.24&47.81&\bfseries 0.09\\
box&\numprint{5}&\numprint{524274}&0.21&0.16&0.13&0.18&0.11&30.31&47.98&\bfseries 0.09\\
box&\numprint{6}&\numprint{524278}&0.21&0.15&0.13&0.17&0.11&30.42&48.13&\bfseries 0.09\\
box&\numprint{7}&\numprint{524267}&0.21&0.15&0.13&0.17&0.11&30.23&48.08&\bfseries 0.09\\
box&\numprint{8}&\numprint{524276}&0.21&0.15&0.12&0.16&0.11&30.28&48.04&\bfseries 0.08\\
box&\numprint{9}&\numprint{524279}&0.22&0.15&0.13&0.17&0.12&30.30&48.20&\bfseries 0.10\\
circle&\numprint{0}&\numprint{411435}&4.51&1.70&1.86&\bfseries 1.02&2.03&26.55&96.09&1831.32\\
circle&\numprint{1}&\numprint{411825}&4.82&1.79&1.86&\bfseries 1.02&2.03&26.92&92.88&1830.27\\
circle&\numprint{2}&\numprint{411096}&4.64&1.80&1.86&\bfseries 1.04&2.04&26.47&94.55&1830.16\\
circle&\numprint{3}&\numprint{412442}&4.65&1.68&1.85&\bfseries 1.01&2.02&27.13&94.86&1829.72\\
circle&\numprint{4}&\numprint{411446}&4.54&1.76&1.86&\bfseries 1.04&2.05&26.42&93.98&1832.46\\
circle&\numprint{5}&\numprint{411565}&4.66&1.75&1.84&\bfseries 1.01&2.02&26.24&94.90&1830.61\\
circle&\numprint{6}&\numprint{411744}&4.55&1.79&1.85&\bfseries 1.01&2.02&27.17&96.92&1830.67\\
circle&\numprint{7}&\numprint{411782}&4.68&1.75&1.85&\bfseries 1.03&2.04&26.40&95.58&1830.21\\
circle&\numprint{8}&\numprint{412035}&4.76&1.65&1.84&\bfseries 1.03&2.04&26.44&95.93&1832.21\\
circle&\numprint{9}&\numprint{411829}&4.68&1.76&1.87&\bfseries 1.03&2.04&26.80&95.38&1830.05\\
disk&\numprint{0}&\numprint{412153}&0.36&0.23&0.32&0.28&0.18&31.21&49.41&\bfseries 0.15\\
disk&\numprint{1}&\numprint{411312}&0.36&0.23&0.33&0.27&0.19&31.05&49.79&\bfseries 0.15\\
disk&\numprint{2}&\numprint{411847}&0.36&0.23&0.32&0.27&0.18&30.86&49.61&\bfseries 0.15\\
disk&\numprint{3}&\numprint{411547}&0.36&0.23&0.32&0.27&0.18&30.89&49.83&\bfseries 0.15\\
disk&\numprint{4}&\numprint{412199}&0.36&0.22&0.32&0.27&0.18&30.98&49.68&\bfseries 0.15\\
disk&\numprint{5}&\numprint{411501}&0.36&0.23&0.32&0.28&0.18&31.07&49.82&\bfseries 0.15\\
disk&\numprint{6}&\numprint{411963}&0.35&0.23&0.33&0.27&0.18&30.96&49.97&\bfseries 0.15\\
disk&\numprint{7}&\numprint{412072}&0.36&0.23&0.32&0.27&0.18&30.91&49.56&\bfseries 0.15\\
disk&\numprint{8}&\numprint{411629}&0.36&0.23&0.33&0.28&0.18&31.09&49.93&\bfseries 0.15\\
disk&\numprint{9}&\numprint{411977}&0.36&0.22&0.32&0.27&0.18&30.95&49.85&\bfseries 0.15\\

  \bottomrule
  \end{tabular}
    }
    \caption{Ratio: 25\% insertions 75\% queries.}
    \label{tab:mixing:25:75:all}
\end{table}

\newpage

-

\newpage

\begin{table}[H]
{%
    \begin{tabular}{lllS[table-format=2.2]S[table-format=2.2]S[table-format=2.2]S[table-format=2.2]S[table-format=2.2]S[table-format=2.2]S[table-format=2.2]S[table-format=4.2]}
    \toprule 
    &&\# True &&&\multicolumn{3}{c}{{{{\bucketNameBase}}}}\\\cmidrule(lr){6-8}
    &Seed&Queries&{{{\competitorAVL}}}&{{{\competitorBTree}}}&{{{\bucketLinearGrowSuffix}}}&{{{\bucketBtreeSuffix}}}&{{{\bucketHullSuffix}}}&{{{\competitorCHTree}}}&{{{\competitorCQTree}}}&{{{\competitorVector}}}\\
    \midrule
bell&\numprint{0}&\numprint{835488}&0.16&0.12&0.49&0.17&0.13&17.08&27.30&\bfseries 0.08\\
bell&\numprint{1}&\numprint{799267}&0.16&0.13&0.46&0.17&0.14&16.98&27.40&\bfseries 0.09\\
bell&\numprint{2}&\numprint{780684}&0.16&0.14&0.48&0.19&0.14&16.95&27.47&\bfseries 0.09\\
bell&\numprint{3}&\numprint{805031}&0.17&0.13&0.47&0.18&0.14&16.92&27.66&\bfseries 0.09\\
bell&\numprint{4}&\numprint{731322}&0.16&0.13&0.52&0.17&0.14&16.95&27.55&\bfseries 0.09\\
bell&\numprint{5}&\numprint{824217}&0.16&0.14&0.47&0.19&0.14&16.99&27.74&\bfseries 0.09\\
bell&\numprint{6}&\numprint{789435}&0.16&0.13&0.46&0.18&0.14&16.97&27.53&\bfseries 0.09\\
bell&\numprint{7}&\numprint{813433}&0.16&0.14&0.46&0.19&0.14&16.86&27.49&\bfseries 0.09\\
bell&\numprint{8}&\numprint{812291}&0.16&0.13&0.45&0.17&0.13&16.99&27.63&\bfseries 0.09\\
bell&\numprint{9}&\numprint{773955}&0.16&0.14&0.50&0.19&0.14&16.92&27.27&\bfseries 0.09\\
box&\numprint{0}&\numprint{1048537}&0.18&0.17&0.14&0.20&0.14&17.57&29.19&\bfseries 0.10\\
box&\numprint{1}&\numprint{1048552}&0.16&0.14&0.14&0.17&0.12&17.50&29.16&\bfseries 0.08\\
box&\numprint{2}&\numprint{1048537}&0.16&0.14&0.14&0.17&0.12&17.36&29.23&\bfseries 0.08\\
box&\numprint{3}&\numprint{1048548}&0.17&0.15&0.15&0.18&0.12&17.35&29.30&\bfseries 0.08\\
box&\numprint{4}&\numprint{1048539}&0.18&0.15&0.14&0.18&0.13&17.44&29.13&\bfseries 0.09\\
box&\numprint{5}&\numprint{1048537}&0.18&0.16&0.14&0.19&0.13&17.44&29.27&\bfseries 0.09\\
box&\numprint{6}&\numprint{1048549}&0.17&0.15&0.14&0.18&0.13&17.45&29.29&\bfseries 0.09\\
box&\numprint{7}&\numprint{1048528}&0.18&0.15&0.14&0.17&0.13&17.38&29.20&\bfseries 0.09\\
box&\numprint{8}&\numprint{1048556}&0.17&0.14&0.14&0.17&0.12&17.46&29.35&\bfseries 0.08\\
box&\numprint{9}&\numprint{1048552}&0.18&0.16&0.13&0.19&0.14&17.35&29.23&\bfseries 0.10\\
circle&\numprint{0}&\numprint{823191}&3.51&1.36&2.21&\bfseries 0.95&1.59&15.86&54.55&813.96\\
circle&\numprint{1}&\numprint{823602}&3.69&1.35&2.21&\bfseries 0.84&1.50&16.17&53.94&813.97\\
circle&\numprint{2}&\numprint{822796}&3.37&1.39&2.22&\bfseries 0.96&1.59&15.77&54.33&813.71\\
circle&\numprint{3}&\numprint{824056}&3.46&1.34&2.20&\bfseries 0.93&1.58&16.18&54.36&813.26\\
circle&\numprint{4}&\numprint{823655}&3.45&1.37&2.22&\bfseries 0.97&1.60&15.96&53.27&814.68\\
circle&\numprint{5}&\numprint{823865}&3.57&1.37&2.21&\bfseries 0.95&1.58&15.55&54.85&813.54\\
circle&\numprint{6}&\numprint{823992}&3.49&1.36&2.21&\bfseries 0.95&1.59&16.23&55.71&813.83\\
circle&\numprint{7}&\numprint{823296}&3.44&1.38&2.21&\bfseries 0.96&1.60&15.82&54.78&813.95\\
circle&\numprint{8}&\numprint{824044}&3.68&1.31&2.20&\bfseries 0.96&1.59&16.00&54.87&814.43\\
circle&\numprint{9}&\numprint{823343}&3.55&1.40&2.22&\bfseries 0.96&1.60&16.16&54.74&813.46\\
disk&\numprint{0}&\numprint{823611}&0.29&0.23&0.52&0.31&0.22&17.87&30.20&\bfseries 0.15\\
disk&\numprint{1}&\numprint{823519}&0.29&0.23&0.52&0.31&0.22&17.76&30.39&\bfseries 0.15\\
disk&\numprint{2}&\numprint{823129}&0.29&0.23&0.52&0.31&0.21&17.72&30.25&\bfseries 0.15\\
disk&\numprint{3}&\numprint{823260}&0.29&0.24&0.53&0.32&0.21&17.73&30.40&\bfseries 0.15\\
disk&\numprint{4}&\numprint{823734}&0.29&0.23&0.52&0.31&0.21&17.77&30.23&\bfseries 0.15\\
disk&\numprint{5}&\numprint{823556}&0.29&0.24&0.52&0.31&0.21&17.82&30.38&\bfseries 0.15\\
disk&\numprint{6}&\numprint{823506}&0.28&0.23&0.52&0.31&0.21&17.81&30.55&\bfseries 0.15\\
disk&\numprint{7}&\numprint{823590}&0.28&0.23&0.51&0.31&0.21&17.70&30.19&\bfseries 0.15\\
disk&\numprint{8}&\numprint{823442}&0.28&0.23&0.52&0.31&0.22&17.84&30.43&\bfseries 0.15\\
disk&\numprint{9}&\numprint{823791}&0.29&0.23&0.51&0.30&0.22&17.80&30.29&\bfseries 0.15\\

  \bottomrule
  \end{tabular}
    }
    \caption{Ratio 50\% insertions 50\% queries.}
    \label{tab:mixing:50:50:all}

\end{table}

\newpage

- 

\newpage

\begin{table}[H]
{%
    \begin{tabular}{lllS[table-format=2.2]S[table-format=2.2]S[table-format=2.2]S[table-format=2.2]S[table-format=2.2]S[table-format=2.2]S[table-format=2.2]S[table-format=4.2]}
    \toprule 
    &&\# True &&&\multicolumn{3}{c}{{{{\bucketNameBase}}}}\\\cmidrule(lr){6-8}
    &Seed&Queries&{{{\competitorAVL}}}&{{{\competitorBTree}}}&{{{\bucketLinearGrowSuffix}}}&{{{\bucketBtreeSuffix}}}&{{{\bucketHullSuffix}}}&{{{\competitorCHTree}}}&{{{\competitorCQTree}}}&{{{\competitorVector}}}\\
    \midrule
bell&\numprint{0}&\numprint{398307}&0.19&0.12&0.29&0.15&0.11&29.55&44.54&\bfseries 0.08\\
bell&\numprint{1}&\numprint{405326}&0.19&0.13&0.29&0.15&0.11&29.54&44.82&\bfseries 0.09\\
bell&\numprint{2}&\numprint{398217}&0.19&0.14&0.30&0.17&0.11&29.54&44.90&\bfseries 0.09\\
bell&\numprint{3}&\numprint{392900}&0.20&0.14&0.30&0.17&0.11&29.50&45.30&\bfseries 0.09\\
bell&\numprint{4}&\numprint{372110}&0.18&0.12&0.31&0.15&0.11&29.46&45.23&\bfseries 0.08\\
bell&\numprint{5}&\numprint{383752}&0.19&0.14&0.29&0.16&0.11&29.53&45.35&\bfseries 0.09\\
bell&\numprint{6}&\numprint{387694}&0.19&0.13&0.31&0.16&0.11&29.45&45.19&\bfseries 0.09\\
bell&\numprint{7}&\numprint{411733}&0.20&0.14&0.30&0.17&0.12&29.51&45.15&\bfseries 0.09\\
bell&\numprint{8}&\numprint{414112}&0.18&0.14&0.29&0.16&0.11&29.44&45.21&\bfseries 0.08\\
bell&\numprint{9}&\numprint{378721}&0.19&0.13&0.31&0.16&0.11&29.36&44.96&\bfseries 0.09\\
box&\numprint{0}&\numprint{524270}&0.22&0.16&0.13&0.18&0.12&30.46&47.54&\bfseries 0.10\\
box&\numprint{1}&\numprint{524281}&0.20&0.14&0.13&0.16&0.10&30.37&47.94&\bfseries 0.08\\
box&\numprint{2}&\numprint{524272}&0.20&0.14&0.13&0.16&0.11&30.22&47.95&\bfseries 0.09\\
box&\numprint{3}&\numprint{524275}&0.20&0.15&0.12&0.16&0.11&30.20&47.99&\bfseries 0.09\\
box&\numprint{4}&\numprint{524270}&0.22&0.17&0.13&0.19&0.12&30.24&47.81&\bfseries 0.09\\
box&\numprint{5}&\numprint{524274}&0.21&0.16&0.13&0.18&0.11&30.31&47.98&\bfseries 0.09\\
box&\numprint{6}&\numprint{524278}&0.21&0.15&0.13&0.17&0.11&30.42&48.13&\bfseries 0.09\\
box&\numprint{7}&\numprint{524267}&0.21&0.15&0.13&0.17&0.11&30.23&48.08&\bfseries 0.09\\
box&\numprint{8}&\numprint{524276}&0.21&0.15&0.12&0.16&0.11&30.28&48.04&\bfseries 0.08\\
box&\numprint{9}&\numprint{524279}&0.22&0.15&0.13&0.17&0.12&30.30&48.20&\bfseries 0.10\\
circle&\numprint{0}&\numprint{411435}&4.51&1.70&1.86&\bfseries 1.02&2.03&26.55&96.09&1831.32\\
circle&\numprint{1}&\numprint{411825}&4.82&1.79&1.86&\bfseries 1.02&2.03&26.92&92.88&1830.27\\
circle&\numprint{2}&\numprint{411096}&4.64&1.80&1.86&\bfseries 1.04&2.04&26.47&94.55&1830.16\\
circle&\numprint{3}&\numprint{412442}&4.65&1.68&1.85&\bfseries 1.01&2.02&27.13&94.86&1829.72\\
circle&\numprint{4}&\numprint{411446}&4.54&1.76&1.86&\bfseries 1.04&2.05&26.42&93.98&1832.46\\
circle&\numprint{5}&\numprint{411565}&4.66&1.75&1.84&\bfseries 1.01&2.02&26.24&94.90&1830.61\\
circle&\numprint{6}&\numprint{411744}&4.55&1.79&1.85&\bfseries 1.01&2.02&27.17&96.92&1830.67\\
circle&\numprint{7}&\numprint{411782}&4.68&1.75&1.85&\bfseries 1.03&2.04&26.40&95.58&1830.21\\
circle&\numprint{8}&\numprint{412035}&4.76&1.65&1.84&\bfseries 1.03&2.04&26.44&95.93&1832.21\\
circle&\numprint{9}&\numprint{411829}&4.68&1.76&1.87&\bfseries 1.03&2.04&26.80&95.38&1830.05\\
disk&\numprint{0}&\numprint{412153}&0.36&0.23&0.32&0.28&0.18&31.21&49.41&\bfseries 0.15\\
disk&\numprint{1}&\numprint{411312}&0.36&0.23&0.33&0.27&0.19&31.05&49.79&\bfseries 0.15\\
disk&\numprint{2}&\numprint{411847}&0.36&0.23&0.32&0.27&0.18&30.86&49.61&\bfseries 0.15\\
disk&\numprint{3}&\numprint{411547}&0.36&0.23&0.32&0.27&0.18&30.89&49.83&\bfseries 0.15\\
disk&\numprint{4}&\numprint{412199}&0.36&0.22&0.32&0.27&0.18&30.98&49.68&\bfseries 0.15\\
disk&\numprint{5}&\numprint{411501}&0.36&0.23&0.32&0.28&0.18&31.07&49.82&\bfseries 0.15\\
disk&\numprint{6}&\numprint{411963}&0.35&0.23&0.33&0.27&0.18&30.96&49.97&\bfseries 0.15\\
disk&\numprint{7}&\numprint{412072}&0.36&0.23&0.32&0.27&0.18&30.91&49.56&\bfseries 0.15\\
disk&\numprint{8}&\numprint{411629}&0.36&0.23&0.33&0.28&0.18&31.09&49.93&\bfseries 0.15\\
disk&\numprint{9}&\numprint{411977}&0.36&0.22&0.32&0.27&0.18&30.95&49.85&\bfseries 0.15\\

  \bottomrule
  \end{tabular}
    }
    \caption{Ratio: 75\% insertions 25\% queries.}
    \label{tab:mixing:75:25:all}
    
\end{table}

\clearpage

\section{\texorpdfstring{Data for scaling input size. \\}{Data for scaling input size.}}
\label{app:scaling}

In the previous section, we demonstrated that our algorithms exhibit different performance profiles across the four synthetic data sets. We now investigate how these performance profiles evolve as the input size increases.
For each of our seven algorithms and each of the four synthetic data sets, we define an experiment consisting of five runs. In each run, we fix the insertion-to-query ratio at 1:1 and vary the number of insertions (and thus queries) in the range $[2^{20}, 2^{26}]$. 
Table~\ref{tab:scaling} reports average runtimes (in seconds) for each experiment. Due to the large number of runs, we impose a timeout of 100 seconds per run.

\paragraph{Results discussion.}
Let $n$ denote the input size and $h = |CH^+(P)|$ the size of the convex hull.
The fully dynamic \competitorCHTree and \competitorCQTree are inefficient and quickly reach the timeout as $n$ increases.
All other algorithms, with the exception of \competitorVector, exhibit logarithmic scaling behaviour. The \competitorVector method also exhibits logarithmic scaling on the \textbf{Bell} and \textbf{Box} data sets. This is because its update time being linear in $h$, and $h$ grows logarithmically with $n$ on these data sets. On the \textbf{Disk} data set, the runtime increases more steeply; however, due to its superior cache locality, the method remains among the most efficient. In contrast, on the \textbf{Circle} data set (where $h \in \Theta(n)$) \competitorVector exceeds the time limit once $n > 2^{20}$.

\begin{table}[H]
\vspace{-1.5cm}
{%
    \begin{tabular}{llS[table-format=2.2]S[table-format=2.2]S[table-format=2.2]S[table-format=2.2]S[table-format=2.2]S[table-format=2.2]S[table-format=2.2]S[table-format=4.2]}
    \toprule 
    & &&&\multicolumn{3}{c}{{{{\bucketNameBase}}}}\\\cmidrule(lr){5-7}
    &$2^k$&{{{\competitorAVL}}}&{{{\competitorBTree}}}&{{{\bucketLinearGrowSuffix}}}&{{{\bucketBtreeSuffix}}}&{{{\bucketHullSuffix}}}&{{{\competitorCHTree}}}&{{{\competitorCQTree}}}&{{{\competitorVector}}}\\
    \midrule

{{{bell}}}&20&0.16&0.13&0.47&0.18&0.14&16.85&27.38&\bfseries 0.09\\
{{{bell}}}&22&0.65&0.55&2.26&0.74&0.55&110.06&149.25&\bfseries 0.36\\
{{{bell}}}&24&2.60&2.21&10.62&2.99&2.22&&&\bfseries 1.45\\
{{{bell}}}&26&10.57&9.03&48.86&12.14&8.88&&&\bfseries 5.79\\
{{{box}}}&20&0.17&0.15&0.14&0.18&0.13&17.29&29.23&\bfseries 0.09\\
{{{box}}}&22&0.71&0.62&0.57&0.74&0.53&113.07&159.07&\bfseries 0.37\\
{{{box}}}&24&2.87&2.48&2.23&2.96&2.11&&&\bfseries 1.49\\
{{{box}}}&26&11.59&9.94&9.01&11.95&8.50&&&\bfseries 5.96\\
{{{circle}}}&20&3.47&1.37&2.21&\bfseries 0.94&1.58&15.85&54.10&813.63\\
{{{circle}}}&22&22.68&10.62&13.30&\bfseries 5.51&8.07&102.31&370.18&\\
{{{circle}}}&24&126.95&64.20&74.26&\bfseries 28.41&38.74&&&\\
{{{circle}}}&26&678.06&347.77&405.01&\bfseries 139.45&180.79&&&\\
{{{disk}}}&20&0.29&0.23&0.52&0.31&0.21&17.68&30.27&\bfseries 0.15\\
{{{disk}}}&22&1.27&0.94&2.50&1.30&0.90&116.39&164.35&\bfseries 0.63\\
{{{disk}}}&24&5.68&3.96&11.94&5.79&3.92&&&\bfseries 2.69\\
{{{disk}}}&26&26.07&16.90&56.35&26.80&19.64&&&\bfseries 12.20\\

  \bottomrule
  \end{tabular}
    }
    \caption{Average Running time per data type for $2^k, k\in \{20,22,24,26\}$. Cells empty if previous run exceeded 100 seconds.}
    \label{tab:scaling}
\end{table}
\vspace{11cm}

The tree-based \competitorAVL and \competitorBTree algorithms have update and query time complexity $O(\log h)$. This behaviour is consistent with the experimental data: these methods scale well on the \textbf{Bell} and \textbf{Box} data sets, but their performance degrades when $h$ grows more rapidly with $n$.
In particular, their scaling on the \textbf{Bell} and \textbf{Box} data sets appears to be slightly better than other algorithms. It may be that there exists an input size where these techniques are the preferred method. 

Our logarithmic method algorithms demonstrate clean logarithmic scaling in $n$ across all data sets. This observation supports our claim that the theoretical $O(\log^2 n)$ query time bound is overly pessimistic in practice.
Among these, the \bucketLinearGrow method exhibits slightly inferior scaling on data sets where the convex hull is relatively large. Although the size of the bottom bucket is fixed, linear-time insertions into this bucket appear to introduce significant overhead.

\paragraph{Overall conclusion.}
For fixed input size, our algorithms exhibit distinct performance profiles. As the input size increases, their relative performance differences remain largely consistent. With the exception of \competitorVector on the \textbf{Disk} and \textbf{Circle} data sets, all algorithms demonstrate comparable scaling behaviour.

\clearpage
\section{\texorpdfstring{Memory Consumption.}{Memory consumption.}}\label{app:memory}

We briefly consider memory usage, focusing exclusively on the synthetic \textbf{Circle} data set, where every point lies on the convex hull. This scenario provides the fairest comparison for the fully dynamic \competitorCHTree and \competitorCQTree methods, which cannot discard points that do not appear on the hull.
In this experiment, we consider $n=2^{20}$  and $n=10^6$ insertions and no queries.
The full results, reporting peak memory consumption, are provided in Table~\ref{tab:memory_0} and \ref{tab:memory_0:1m}. Averaged outcomes for $n=10^6$ are plotted in  Figure~\ref{fig:mem:circle:1m}. For $n=2^{20}$ they are in the main paper.

\vspace{3.cm}

\begin{table}[H]
{%
    \begin{tabular}{llS[table-format=2.2]S[table-format=2.2]S[table-format=2.2]S[table-format=2.2]S[table-format=2.2]S[table-format=2.2]S[table-format=2.2]S[table-format=4.2]}
    \toprule 
    &&&&\multicolumn{3}{c}{{{{\bucketNameBase}}}}\\\cmidrule(lr){5-7}
    &Seed&{{{\competitorAVL}}}&{{{\competitorBTree}}}&{{{\bucketLinearGrowSuffix}}}&{{{\bucketBtreeSuffix}}}&{{{\bucketHullSuffix}}}&{{{\competitorCHTree}}}&{{{\competitorCQTree}}}&{{{\competitorVector}}}\\
    \midrule
circle&\numprint{0}&64.82&22.04&28.00&23.87&23.75&477.28&1014.49&\bfseries 19.06\\
circle&\numprint{1}&64.76&22.14&29.94&32.41&25.39&477.10&1013.30&\bfseries 19.06\\
circle&\numprint{2}&64.70&22.66&27.50&23.88&27.56&477.04&1016.32&\bfseries 19.00\\
circle&\numprint{3}&64.70&23.78&33.60&22.09&21.91&477.00&1015.85&\bfseries 19.10\\
circle&\numprint{4}&64.65&23.64&30.05&23.94&24.68&477.09&1015.97&\bfseries 19.06\\
circle&\numprint{5}&64.82&22.15&30.55&23.85&23.85&477.19&1016.99&\bfseries 19.03\\
circle&\numprint{6}&64.71&24.89&26.55&39.99&35.38&477.08&1017.61&\bfseries 19.10\\
circle&\numprint{7}&64.78&23.84&35.91&28.47&26.37&477.10&1015.26&\bfseries 19.16\\
circle&\numprint{8}&64.65&22.09&32.07&23.90&23.84&477.06&1016.85&\bfseries 19.14\\
circle&\numprint{9}&64.70&25.53&25.61&25.17&23.93&477.09&1014.19&\bfseries 18.95\\

  \bottomrule
  \end{tabular}
    }
    \caption{ Memory usage in MB after inserting $2^{20}$ points across 10 experiments on \textbf{Circle} data. }
    \label{tab:memory_0}
    
\end{table}

\begin{table}[H]
{%
    \begin{tabular}{llS[table-format=2.2]S[table-format=2.2]S[table-format=2.2]S[table-format=2.2]S[table-format=2.2]S[table-format=2.2]S[table-format=2.2]S[table-format=4.2]}
    \toprule 
    &&&&\multicolumn{3}{c}{{{{\bucketNameBase}}}}\\\cmidrule(lr){5-7}
    &Seed&{{{\competitorAVL}}}&{{{\competitorBTree}}}&{{{\bucketLinearGrowSuffix}}}&{{{\bucketBtreeSuffix}}}&{{{\bucketHullSuffix}}}&{{{\competitorCHTree}}}&{{{\competitorCQTree}}}&{{{\competitorVector}}}\\
    \midrule

  \bottomrule
  \end{tabular}
    }
    \caption{ Memory usage in MB after inserting $10^{6}$ points across 10 experiments on \textbf{Circle} data. }
    \label{tab:memory_0:1m}
    
\end{table}
\begin{figure}[H]
\vspace{-0.25cm}
    \centering
    \includegraphics[width=\linewidth]{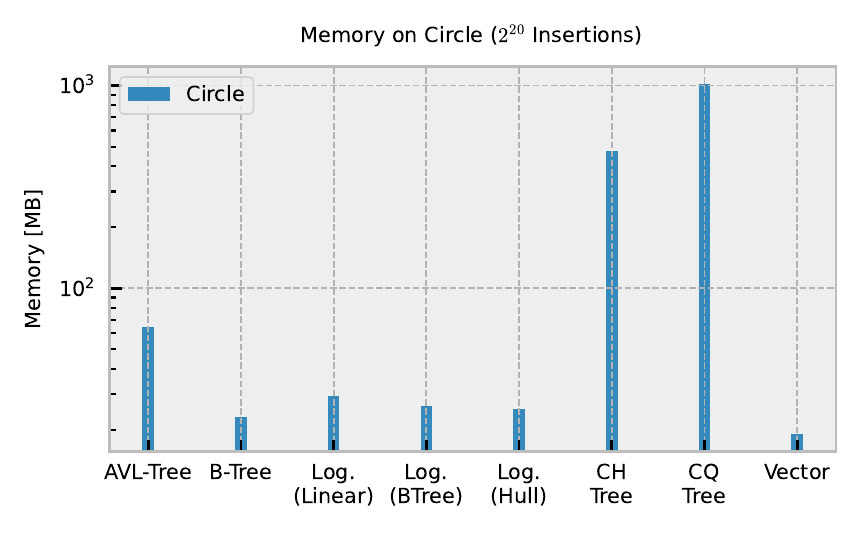}
    \caption{Memory consumption when $n  = 10^{6}$.}
    \label{fig:mem:circle:1m}
\end{figure}
\clearpage

\paragraph{Results discussion for $n=2^{20}$.}

Naturally, the \competitorVector algorithm has the smallest memory footprint, and we use it as the baseline for comparison.
Among the tree-based implementations, \competitorBTree uses approximately 1.23 times more memory than the baseline. In contrast, the AVL-based \competitorAVL consumes significantly more space, a factor 3.4, due to its reliance on numerous pointers and balance counters.
Our implementations based on the logarithmic method are also space-efficient. The \bucketLinearGrow method is the most compact among them, using only 1.49 times the memory of the baseline. The \bucketBtree and \bucketHull variants require 2.46 and 2.14 times more memory, respectively.
Finally, the fully dynamic \competitorCHTree and \competitorCQTree methods consume significantly more memory, using 25.20 and 53.50 times the baseline, respectively. Recall that on the \textbf{Circle} data set, all algorithms store the entire input point set. Hence, the observed overhead is attributable purely to the structure and metadata maintained by each data structure.
\paragraph{Results discussion for $n=10^{6}$.}
In Figure~\ref{fig:mem:circle:1m} and Table~\ref{tab:memory_0:1m} the results for only $n=10^6$ insertions are shown. the \competitorVector method requires the least memory and will be used as comparison again. All competitors (\competitorAVL,\competitorBTree, \competitorCHTree and \competitorCQTree) have the same relative additional footprint as above (within a small error margin). For our methods, we report different relative memory consumption: \bucketLinearGrow requires 1.57 times the memory. The methods growing by the actual hull size in the first bucket (\bucketBtreeSuffix and \bucketHullSuffix) require with a factor of 1.40 and 1.34 respectively much less memory than with $n=2^{20}$.
We can explain this behavior by the difference in the point when a merge is triggered and the fact that two separate half-hulles are maintained by the algorithms. \bucketLinearGrow expunges linearly for every 512 points seen and each separate hull contains 256 points only at the time.
This causes the second bucket to have also a size of 512 and then start the exponential growth.
The  other two \bucketNameBase methods (\bucketBtreeSuffix, \bucketHullSuffix) have to wait until 1024 points have passed. In each of the two separate hull only every second point is in the hull.
This leads to a difference in allocating the higher buckets.
For $n=10^6$ all buckets are nearly saturated in the \bucketBtreeSuffix and \bucketHullSuffix method.  
For $n=2^{20}$, which is the next bigger power of two, the two methods have to allocate a new bucket and move the points there, leaving the lower buckets empty, but still allocated. 

\paragraph{Conclusion.} All of our new methods are more efficient than the fully-dynamic competitors in a fair setting, where every point is in the convex hull. They also require less memory than \competitorAVL. Due to the logarithmic design, our algorithms may have empty, but allocated buckets impacting the memory footprint. Since allocating memory is expensive, we opt to keep them that way.

\section{\texorpdfstring{Further parameter studies. \\}{Further parameter studies.}}

In Section~\ref{sec:parameter} we considered the parameters of our algorithms for the logarithmic method. 
In particular, we considered the size of the bottom-most bucket $B_\ell$ and the data structure that maintains $CH^+(B_\ell)$ (via \competitorVector or \competitorBTree).
There is one parameter consideration that precedes that data structure, which is the parameter $B$ that our B-tree implementations use. 
In this section, we briefly verify what good choices for $B$ are.

\subsection{\texorpdfstring{Size of a node in the B-tree. \\}{Size of a node in the B-tree}}\label{app:parameter:btree}

The B-Tree implementation allows us to choose the size of a node in the B-tree in bytes. We tested a range of values $B\in\{16,256,1024,4096\}$, the results are shown in Table~\ref{tab:tuning:b:btree}.
Since $B=1024$ performs the best when averaging across both types of data, we choose it as default size of $B$ throughout the paper.

\begin{table}[H]
\begin{tabular}{lrrrrrr}\toprule
&\multicolumn{1}{c}{$B=16$}&\multicolumn{1}{c}{$B=256$}&\multicolumn{1}{c}{$B=1024$}&\multicolumn{1}{c}{$B=4096$}\\\midrule

Insertions&\\\midrule
box&0.10&0.08&0.07&\bfseries 0.07\\
circle&1.29&0.82&\bfseries 0.72&0.74\\
\midrule
Queries\\
\midrule
box&0.10&0.09&0.07&\bfseries 0.07\\
circle&1.16&0.71&0.60&\bfseries 0.60\\
\midrule
Overall\\\midrule
box&0.20&0.17&0.15&\bfseries 0.15\\
circle&2.45&1.53&\bfseries 1.32&1.34\\
\bottomrule  
\end{tabular}
\caption{ Our experiments for tuning the size of a node in bytes of the b-tree on $2^{20}$ insertions and queries. Running time is shown in in seconds, the fastest entry is \textbf{bold}.}
\label{tab:tuning:b:btree}
\end{table}

\clearpage
\section{\texorpdfstring{Implementing other queries. \\}{Implementing other queries.}}
\label{app:queries}

Chan identifies six convex hull queries~\cite{chan2011three}:
\begin{enumerate}[noitemsep]
    \item Finding the extreme point of $P$ in a given direction,
    \item Deciding whether a query line intersects $CH(P)$,
    \item Finding the hull vertices tangent to a query line,
    \item Deciding whether a point $q$ lies inside the area bounded by $CH(P)$,
    \item Finding the intersection points with a query line,
    \item Finding the intersection between two convex hulls.
\end{enumerate}

The first three queries are decomposable. That is, if one partitions $P$ into point sets $P_1, \ldots, P_k$ and maintains $CH^+(P_i)$ for each part then one may query these $k$ hulls separately to obtain the answer on $CH^+(P)$.

We describe in Section~\ref{sec:logarithmic} how to implement the fourth query for our \logarithmic and we use this query for our experiments. 
In this appendix, we show how to implement the fifth query also. 
We consider the implementation of the sixth query (convex hull intersection) to be out of scope for three reasons: first, its implementation is rather complex. Second, we are not aware of any existing implementation of this query (static or dynamic). Third, it is known in theory how to combine the predicates for the first five queries to obtain the sixth (e.g., Chazelle and Dobkin~\cite{Chazelle1980Detection}. We note that the actual implementation details involve a large number of case distinctions).

\subsection{\texorpdfstring{Implementing standard intersection. \\}{Implementing standard intersection. }}
Let $CH$ be a convex set of $h$ edges of an upper convex hull and let $\ell$ be a query line.
We first describe how to find the edge of $CH$ that is intersected by $\ell$ (if any such edge exists) using quadratic geometric predicates in $O(\log h)$ time. 
Our key observation is as follows. Let $(u, v)$ be an edge of $CH$ then:

\begin{itemize}[noitemsep]
    \item $\ell$ intersects $(u, v)$ if and only $u$ lies on one side of $\ell$ and $v$ on the other.  
    \item Otherwise if $u$ lies right of the intersection point $\ell \cap line(u, v)$ then $\ell$ must intersect an edge left of $(u, v)$ (if any)
    \item Otherwise $v$ lies left of the intersection point $\ell \cap line(u, v0$ and $\ell$ must intersect an edge right of $(u, v)$ (if any).
\end{itemize}

Consider the edges of $CH^+(P)$ in a balanced binary tree.
We use this observation to create a straightforward $O(\log n)$-time recursive querying algorithm to find the edge $(u, v)$ that intersects $\ell$ (our algorithm outputs $\emptyset$ if no such edge exists).

\begin{algorithm}[H]
    \caption{\texttt{LineIntersect}(edge $(u, v)$, line $\ell$)}
    \label{alg:lineintersect}
    \begin{algorithmic}
    \IF{$(u, v)$ is \texttt{null}}
    \RETURN $\emptyset$
    \ELSIF{\texttt{above\_line}($\ell$, $u$) AND $\neg$ \texttt{above\_line}($\ell$, $v$) }
    \RETURN $(u, v)$
    \ELSIF{\texttt{above\_line}($\ell$, $v$) AND $\neg$ \texttt{above\_line}($\ell$, $u$) }
     \RETURN $(u, v)$
    \ELSIF{\texttt{lies\_right}$(\ell, (u, v))$}
    \RETURN \texttt{LineIntersect}($(u, v)$.left, $\ell$)
    \ELSE
    \RETURN  \texttt{LineIntersect}($(u, v)$.right, $\ell$)
    \ENDIF
    \end{algorithmic}
\end{algorithm}

We showed in Section~\ref{sec:robustness} how to implement the helper function $\texttt{above\_line}(\ell, u)$. 
What remains is to implement \texttt{lies\_right}$(\ell, (u, v))$:

\begin{lemma}
    Let $\ell$ be a line and $(u, v)$ be a segment s.t. the supporting line has a different slope, then: 
    $\textnormal{\texttt{lies\_right}}(\ell, (u, v)) =$
    \begin{align*}
    &  \bigl( slope(\ell) < slope(line(u, v)) \textnormal{ AND } \textnormal{\texttt{above\_line}}( \ell, u)   \bigr) \\ 
    &\vee   \bigl( slope(\ell) > slope(line(u, v)) \textnormal{ AND } \neg \textnormal{\texttt{above\_line}}( \ell, u)   \bigr)
    \end{align*}
\end{lemma}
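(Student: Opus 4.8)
The plan is to reduce both sides of the claimed equivalence to strict sign conditions on a single affine expression, and then finish by a two-way case split on the slope comparison. First I would write both lines as affine functions of $x$: set $f(x) = m_\ell\, x + b_\ell$ for $\ell$ and $g(x) = m_{uv}\, x + b_{uv}$ for $line(u,v)$, with shorthand $m_\ell := slope(\ell)$ and $m_{uv} := slope(line(u,v))$. The hypothesis gives $m_\ell \neq m_{uv}$, so both lines are non-vertical and meet in a unique point $I = \ell \cap line(u,v)$. Since $line(u,v)$ is non-vertical, the $x$-coordinate is a faithful parameter along it, so $\texttt{lies\_right}(\ell,(u,v))$ holds exactly when $u.x > I.x$. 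Likewise $\texttt{above\_line}(\ell,u)$ holds exactly when $u.y - f(u.x) > 0$. Thus the whole statement becomes a comparison between the signs of $(u.x - I.x)$ and of $(u.y - f(u.x))$.

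The key step is the identity
\begin{align*}
u.y - f(u.x) = (m_{uv} - m_\ell)\,(u.x - I.x).
\end{align*}
This is obtained by subtracting the two line equations: $g(x) - f(x) = (m_{uv}-m_\ell)x + (b_{uv}-b_\ell)$ is affine in $x$ and vanishes at $x = I.x$ by definition of the intersection, so it factors as $(m_{uv}-m_\ell)(x - I.x)$; evaluating at $x = u.x$ and using $g(u.x) = u.y$ (as $u \in line(u,v)$) yields the displayed identity. Its content is that the sign of $u.y - f(u.x)$ equals the sign of $(m_{uv} - m_\ell)$ multiplied by the sign of $(u.x - I.x)$.

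With the identity in hand, the proof finishes by splitting on the sign of $m_{uv} - m_\ell$, which is precisely the slope comparison appearing in the two disjuncts. When $m_\ell < m_{uv}$ the factor is positive, so $\texttt{above\_line}(\ell,u) \iff u.x > I.x \iff \texttt{lies\_right}(\ell,(u,v))$, matching the first disjunct. When $m_\ell > m_{uv}$ the factor is negative, so $\texttt{above\_line}(\ell,u) \iff u.x < I.x \iff \neg\,\texttt{lies\_right}(\ell,(u,v))$, matching the second disjunct. Because the slopes differ, these two cases are exhaustive and mutually exclusive, so the right-hand disjunction evaluates to exactly $\texttt{lies\_right}(\ell,(u,v))$, as required.

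I expect the only real subtlety to be the degenerate configuration $u \in \ell$, equivalently $u = I$: there both strict predicates return false while the second disjunct can return true, a genuine artefact of the strict inequalities. I would dispose of it by invoking general position, or—more faithfully to the use of the predicate in Algorithm~\ref{alg:lineintersect}—by observing that whenever $u$ lies on $\ell$ one of the earlier \texttt{above\_line} branches already fires and returns the edge $(u,v)$, so \texttt{lies\_right} is only ever evaluated when $u \notin \ell$, i.e. $u.y \neq f(u.x)$. In that regime both factors in the identity are nonzero and the equivalences above hold without caveat.
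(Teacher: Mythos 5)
Your proof is correct and takes essentially the same approach as the paper's: a two-way case split on the slope comparison, reducing \texttt{lies\_right} in each case to lying strictly above (respectively below) $\ell$; your factorisation $u.y - f(u.x) = (m_{uv} - m_\ell)(u.x - I.x)$ simply makes rigorous what the paper asserts geometrically and dispatches ``by symmetry''. Your closing observation is a genuine refinement rather than a quibble: because the lemma's formula uses $\neg$\texttt{above\_line}$(\ell, u)$, which is \emph{true} when $u$ lies on $\ell$, while \texttt{lies\_right} is strict, the stated equivalence really does fail in the degenerate case $u = \ell \cap line(u,v)$ with $slope(\ell) > slope(line(u,v))$ --- a case the paper's proof glosses over by phrasing its second case with the strict ``lies below''.
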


\begin{proof}
    Suppose that $slope(\ell) < slope(line(u, v))$. 
    Then $u$ lies right of $\ell \cap line(u, v)$ if and only if $u$ lies above the halfplane bounded from above by $\ell$.
    By symmetry, if $slope(\ell) > slope(line(u, v))$ then $u$ lies right of $\ell \cap line(u, v)$ if and only if $c$ lies below the halfplane bounded from above by $\ell$.
\end{proof}

\subsection{\texorpdfstring{Intersecting our logarithmic structure. \\}{Intersecting our logarithmic structure. }}
Finally, we show an algorithm to find the point of intersection between $CH^+(P)$ and a line $\ell$ in our \logarithmic.
Recall that this data structure partitions $P$ across buckets $B_i$. 

\begin{observation}
    If a line $\ell$ does not intersect $CH(B_i)$ for $i$ then $\ell$ does not intersect $CH^+(P)$.     
\end{observation}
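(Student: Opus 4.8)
The plan is to prove the statement directly, working entirely at the level of halfplanes. I would use the elementary equivalence that a line $\ell$ avoids a convex set $K$ exactly when $K$ is contained in one of the two open halfplanes bounded by $\ell$. Since the buckets partition $P$, we have $P=\bigcup_i B_i$ and hence $CH(P)=CH\!\big(\bigcup_i CH(B_i)\big)$, with $CH^+(P)$ being the upper part of the boundary of this region. This containment is the structural fact I would exploit: the conclusion ``$\ell$ does not intersect $CH^+(P)$'' follows as soon as I know that $CH(P)$, a convex region whose boundary contains $CH^+(P)$, lies off the line $\ell$.

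Concretely, I would proceed in three steps. First, apply the halfplane equivalence to each bucket: the hypothesis that $\ell$ misses $CH(B_i)$ for every index $i$ says precisely that each $B_i$ is contained in one of the two open halfplanes determined by $\ell$. Second, promote these $O(\log n)$ per-bucket separations into a single separation of the whole point set $P$ from $\ell$. Third, close the argument by convexity: once $P$ lies strictly on one side of $\ell$, the smallest enclosing convex region $CH(P)$ lies in the corresponding closed halfplane, its boundary cannot cross $\ell$, and in particular the sub-chain $CH^+(P)$ is disjoint from $\ell$, as required.

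The main obstacle is the second step, namely deducing that all buckets sit on a \emph{common} side of $\ell$ so that $P$ as a whole is separated. This is the only point where the partition structure, rather than a single convex set, is used, and it is where care is genuinely required: one must rule out the configuration in which $\ell$ passes between two buckets lying on opposite sides while still missing each individual bucket hull. I would resolve this by reducing to the same per-bucket tangent data already used for the containment query: by Observation~\ref{obs:decreasing} each half-chain is $x$-monotone with decreasing slopes and every vertex of $CH^+(P)$ is a bucket vertex, so a hypothetical crossing of $\ell$ with $CH^+(P)$ can be localized to a single hull edge whose endpoints come from (at most two) buckets, mirroring the $U\times V$ combination of Lemma~\ref{lem:test}. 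Making this localization fully rigorous, and thereby forcing the common-side conclusion, is the delicate heart of the proof; the first and third steps are routine consequences of convexity.
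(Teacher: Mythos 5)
Your steps 1 and 3 are indeed routine, but the ``delicate heart'' you isolate in step 2 is not merely delicate---it is unprovable, because the configuration you hope to rule out actually occurs. Take $B_1=\{(0.1,10),\,(0.9,10),\,(0.5,11)\}$ and $B_2=\{(0,0),\,(1,0),\,(0.5,-1)\}$ with $\ell$ the line $y=5$. Then $\ell$ misses $CH(B_1)$ (which lies entirely above $\ell$) and misses $CH(B_2)$ (entirely below $\ell$), yet for $P=B_1\cup B_2$ the upper hull $CH^+(P)$ is the chain $(0,0),(0.1,10),(0.5,11),(0.9,10),(1,0)$, whose first and last edges cross $\ell$ (at $x=0.05$ and $x=0.95$). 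So the per-bucket separations genuinely can land on opposite sides of $\ell$, and your proposed repair---localising a hypothetical crossing to a single hull edge via Observation~\ref{obs:decreasing} and a Lemma~\ref{lem:test}-style bridge analysis---cannot force the common-side conclusion: in this example the crossing edges of $CH^+(P)$ are exactly bridges between the two buckets, each of whose hulls is missed by $\ell$, so the localisation produces no contradiction. Your argument goes through only under the extra hypothesis, supplied neither by the statement nor by your steps, that all buckets lie on the \emph{same} open halfplane of $\ell$; with that hypothesis added, convexity of the halfplane gives $CH(P)$ inside it and your steps 1 and 3 finish correctly.

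For the comparison you were asked about: the paper gives no proof of this observation at all---it is asserted as self-evident and then used as a filter in the line-intersection query---so there is no argument of theirs to match yours against. Your attempt is nonetheless informative, because the obstruction you ran into is real: read as ``for all $i$,'' the observation is false as stated, as the counterexample above shows. A correct version must additionally record on which side of $\ell$ each bucket hull lies; if all buckets report the same side, your halfplane argument applies, and in the mixed-side case $\ell$ may well cross $CH^+(P)$, with the crossing lying on a bridge between oppositely-placed buckets, which is precisely the case the surrounding Definition~\ref{def:intersect} and bridge machinery would need to handle (and, for the same reason, that machinery also needs the set $U$ there to be nonempty to function). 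So the gap in your proof is not a failure of technique: it is the statement itself that is missing a hypothesis.
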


\begin{definition}
    \label{def:intersect}
    Given our buckets $\{ B_i \}$ and a line $\ell$, denote by $U$ the set of all vertices such that there exists an index $i$ and an edge $(u, v)$ or $(v, u)$ of $CH^+(B_i)$ that intersects $\ell$ (each index $i$ may have two such edges). 
\end{definition}

\begin{definition}
    For any vertex $a$ and bucket $B_i$ denote by $L_i^a$ all edges of $B_i$ left of $a$ and by $R_i^a$ all edges right of $a$.
    For any vertex $a \in B_i$, denote by $bridges(a)$ the union over all $j$ of:
    \begin{itemize}[noitemsep]
        \item  the bridge between $L_i^a$ and $R_j^a$, and
        \item the bridge between $L_j^a$ and $R_i^a$.
    \end{itemize}
    For any set $U$ denote by $bridges(U) = \bigcup_{u \in U} bridges(u)$.    
\end{definition}

\begin{lemma}
   Consider our buckets $\{ B_i \}$, a line $\ell$, and the set of vertices $U$ from Definition~\ref{def:intersect}.
   If $\ell$ intersects an edge $(u, v)$ of $CH^+(P)$ then $(u, v) \in bridges(U)$.   
\end{lemma}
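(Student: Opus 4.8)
The plan is to split the statement into an algebraic (tangency) part and a combinatorial (witness\nobreakdash-existence) part. Assume $\ell$ crosses an edge $(u,v)$ of $CH^+(P)$ with $u \in B_i$, $v \in B_j$, and write $c^\ast$ for the $x$-coordinate of the crossing point $\ell \cap line(u,v)$; after mirroring if necessary I may assume $u$ lies left of $v$, so $u.x \le c^\ast < v.x$. I would show (A) that for \emph{every} split coordinate $c$ with $u.x \le c < v.x$, the upper common tangent between the part of $CH^+(B_i)$ left of $c$ and the part of $CH^+(B_j)$ right of $c$ is exactly the edge $(u,v)$; and (B) that $U$ contains a vertex $a$ whose bucket is $B_i$ or $B_j$ and whose $x$-coordinate lies in $[u.x,v.x)$. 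Together these place $(u,v)$ in $bridges(a)\subseteq bridges(U)$, since taking $c=a.x$ realises $(u,v)$ as the bridge between $L_i^a$ and $R_j^a$, which belongs to $bridges(a)$ regardless of whether $a \in B_i$ or $a \in B_j$ (it appears as a first\nobreakdash-type bridge in the former case and a second\nobreakdash-type bridge in the latter).

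Part (A) is the clean step and mirrors the tangent argument of Lemma~\ref{lem:test}. Because $(u,v)$ is an edge of the global upper hull $CH^+(P)$, every point of $P$, and in particular every vertex of $CH^+(B_i)$ and of $CH^+(B_j)$, lies weakly below $line(u,v)$. Fixing any $c \in [u.x,v.x)$, the vertex $u$ belongs to the left part of $CH^+(B_i)$ (since $u.x \le c$) and $v$ belongs to the right part of $CH^+(B_j)$ (since $c < v.x$). A line through one vertex of each of two convex chains that keeps both chains weakly below it is, by definition and uniqueness of the common tangent, their bridge; hence that bridge is $(u,v)$. This reasoning is indifferent to whether $i=j$: if both hull vertices share a bucket, splitting that single chain at $c$ again returns $(u,v)$.

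Part (B) is where the real work lies, and I expect it to be the main obstacle. The height function $F(x)$ of $CH^+(P)$ is continuous, equals $u.y$ (above $\ell$) at $x=u.x$ and $v.y$ (below $\ell$) at $x=v.x$, and therefore meets $\ell$ at $c^\ast$, lying above $\ell$ just to the left and below just to the right. The difficulty is to transfer this \emph{global} crossing down to a \emph{single} bucket: $F$ is the upper hull of the union of the $CH^+(B_k)$, so the bucket realising the envelope maximum just left of $c^\ast$ (which is above $\ell$) need not be the one realising it just right (below $\ell$). In the generic case I would take $a$ to be the left endpoint of the last edge of $CH^+(B_i)$ that $\ell$ crosses at or before $c^\ast$, giving $a \in U \cap B_i$ with $u.x \le a.x < v.x$, and then apply Part~(A) with $c=a.x$; the symmetric bridge between $L_j^a$ and $R_i^a$ covers the mirror case in which the witnessing crossing sits inside $B_j$. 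The genuinely delicate configurations are the boundary ones, where $u$ is (near) the rightmost point of $B_i$ and $v$ (near) the leftmost point of $B_j$, so that neither bucket's upper hull individually meets $\ell$ inside $[u.x,v.x)$; dispatching these, either by appealing to the earlier observation that a line missing every $CH(B_k)$ also misses $CH^+(P)$, or by ensuring the extreme points of the buckets spanning $c^\ast$ are captured in $U$, is the crux on which the whole argument turns.
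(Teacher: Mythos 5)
Your Part~(A) is sound, and it is exactly the closing step of the paper's own proof: because $(u,v)$ is an edge of the global hull, every vertex of $L_i^a \cup R_j^a$ lies weakly below $line(u,v)$, so by uniqueness of the bridge across the vertical split line that bridge must be $(u,v)$; your remark that this edge lands in $bridges(a)$ whether $a \in B_i$ (first type) or $a \in B_j$ (second type) is also correct. The genuine gap is Part~(B), which you state but never prove: the existence of a vertex $a \in U$ that belongs to $B_i$ or $B_j$ and has $a.x \in [u.x, v.x)$. This requirement is essential, not cosmetic --- since $bridges(a)$ only contains bridges in which one chain comes from $a$'s \emph{own} bucket, a witness lying in a third bucket $B_k$ produces only bridges touching $B_k$ and can never yield $(u,v)$. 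Your proposal explicitly leaves this step open, so it is not a proof.

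Moreover, the boundary configuration you flag as ``the crux'' is not merely delicate; it is fatal to the statement as written. Buckets in the logarithmic method are determined by insertion time, not by $x$-coordinate, so consider $B_i = \{(0,10),(1,10)\}$, $B_j = \{(2,0),(3,0)\}$, and $\ell : y = 5$. The line $\ell$ crosses the edge $(u,v) = \bigl((1,10),(3,0)\bigr)$ of $CH^+(P)$, yet it meets neither $CH^+(B_i)$ nor $CH^+(B_j)$ (nor either full hull), so $U = \emptyset$ and $bridges(U) = \emptyset$. Neither of your two proposed escape routes closes this: the paper's preceding Observation (a line missing every $CH(B_k)$ misses $CH^+(P)$) fails on the very same example, and ``ensuring the extreme points of the buckets are captured in $U$'' amounts to changing Definition~\ref{def:intersect}. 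For comparison, the paper's proof leaps over precisely this point with the unjustified sentence that one of the crossed-edge endpoints ``must lie right of $u$ and left of $v$,'' which is false in the configuration above (and vacuous when no bucket edge is crossed at all). So you have correctly isolated the one step on which everything turns, but neither your proposal nor the paper supplies it, and for the lemma as stated --- with $U$ as defined --- it cannot be supplied.
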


\begin{proof}
    Let $\ell$ intersect an edge $(u, v)$ of $CH^+(P)$ then there exist indices $i$ and $j$ such that $u$ is a vertex of $CH^+(B_i)$ and $v$ is a vertex of $CH^+(B_j)$. 
    If $i = j$ then the lemma immediately follows so let $i \neq j$. 

    Let $(a, b)$ and $(c, d)$ be the at most two edges of $CH^+(B_i)$ that is intersected by $\ell$. Similarly, let $(e, f)$ and $(g, h)$ be the at most two edges of $CH^+(B_j)$ that is intersected by $\ell$.
    If any of these edges has as a vertex $u$ or $v$, then the lemma immediately follows as $(u, v)$ must be an edge of $CH^+(B_i \cup B_j)$.
    
    If neither of these edges has as a vertex $u$ or $v$ then one of these vertices must lie  right of $u$ and left of $v$. 
    Denote by $x$ such a vertex.
    Then $L_i^x$ contains the vertex $u$ and $R_j^x$ contains the vertex $v$.
    It follows from $(u, v) \in CH^+(P)$ that $(u, v) \in CH^+(L_i^x \cup R_j^x)$.
    Any join of two convex hull that are separated by a vertical line has a unique edge that has one vertex left of the line and one vertex right of this line and we call this edge the bridge. Thus $(u, v)$ equals the bridge between $L_i^a$ and $R_j^a$ and so $(u, v) \in Bridges(U)$.  
\end{proof}

\end{document}